\newcommand{\vars}{\mathit{vars}}
\newcommand{\allterms}{\mathit{ter}}
\newcommand{\terms}{\tau}
\newcommand{\dom}{\mathit{dom}}
\newcommand{\gen}{\ensuremath{\mathit{gen}}}
\newcommand{\compatible}{\mathit{comp}}
\newcommand{\enforce}{\triangleleft}
\newcommand{\img}{\mathit{img}}
\newcommand{\au}{\texttt{au}}
\newcommand{\dau}{\texttt{dau}}
\title{Anti-unification of Unordered Goals}
\author{Gonzague Yernaux}{University of Namur - Faculty of Computer Science (Belgium)\\Namur Digital Institute}{gonzague.yernaux@unamur.be}{https://orcid.org/0000-0001-6430-8168}{}
\author{Wim Vanhoof}{University of Namur - Faculty of Computer Science (Belgium)\\Namur Digital Institute}{wim.vanhoof@unamur.be}{https://orcid.org/0000-0003-3769-6294}{}
\authorrunning{G. Yernaux and W.\, Vanhoof} 
\keywords{Anti-unification, Logic programming, NP-completeness, Time complexity, Algorithms, Inductive logic programming}
\begin{document}   

	\maketitle 
	\begin{abstract}
		Anti-unification in logic programming refers to the process of capturing common syntactic structure among given goals, computing a single new goal that is more general called a generalization of the given goals. Finding an arbitrary common generalization for two goals is trivial, but looking for those common generalizations that are either as large as possible (called largest common generalizations) or as specific as possible (called most specific generalizations) is a non-trivial optimization problem, in particular when goals are considered to be \textit{unordered} sets of atoms. In this work we provide an in-depth study of the problem by defining two different generalization relations. We formulate a characterization of what constitutes a most specific generalization in both settings. While these generalizations can be computed in polynomial time, we show that when the number of variables in the generalization needs to be minimized, the problem becomes NP-hard. We subsequently revisit an abstraction of the largest common generalization when anti-unification is based on injective variable renamings, and prove that it can be computed in polynomially bounded time. 
	\end{abstract}

	%%%%%%%%%%%%%%%%%%%%%%%%%%%%%%%%%%%%%%%%%%%%%%%%%%%
	%%%%%%%%%%%%%%%%%%%%%%%%%%%%%%%%%%%%%%%% ABSTRACT
	%%%%%%%%%%%%%%%%%%%%%%%%%%%%%%%%%%%%%%%%%%%%%%%%%%%

	%%%%%%%%%%%%%%%%%%%%%%%%%%%%%%%%%%%%%%%%%%%%%%%%%%%
	%%%%%%%%%%%%%%%%%%%%%%%%%%%%%%%%%%%%%%%% Intro
	%%%%%%%%%%%%%%%%%%%%%%%%%%%%%%%%%%%%%%%%%%%%%%%%%%%
	\section{Motivation and Objectives}
%Anti-unification is an important topic in Logic Programming. It refers to the process of computing some expression $G$, called a generalization, that captures common structure amongst a set of syntactic expressions $E$. In this work, we study the anti-unification of logical \textit{goals}, a goal being a set of atomic structures. To simplify our approach the set $E$ will be composed \textit{two goals}, but our results can easily be extended to the more general case where any number of goals need to be generalized. 

Anti-unification refers to the process of generalizing two (or more) program objects $S$ into a single, more general, program object that captures some of the structure that is common to all the objects in $S$. In a classical logic programming context, the atom $p(X,Y)$ can thus be seen as a generalization of both the atoms $p(f(A), U)$ and $p(f(g(B)),h(C))$, thanks to the variables $X$ and $Y$. 

Anti-unification constitutes a useful tool in various contexts ranging from program analysis techniques (including partial evaluation, refactoring, automatic theorem proving, program transformation, formal verification and test-case generation~\cite{au-applications,calculus-constr,DESCHREYE1999231,lg-gs,under-implication}) to automated reasoning \cite{ilp-theory-and-methods,Muggleton90efficientinduction} or analogy making~\cite{analogy-making}, supercompilation~\cite{Sorensen95analgorithm} and even plagiarism detection~\cite{clones}. Many of these static techniques are executed on programs written in the form of (constraint) Horn clauses, a formalism that has been praised for its ability to capture a program's essence in a quite universal and straightforward manner~\cite{horn-clauses-intermediate-representation}. 

In the introductive example above, the presence of variables $X$ and $Y$ conceptually allows concrete instances (i.e. less general objects) to harbor any value at the positions corresponding to the variable positions. The generalization process is indeed usually achieved by ``forgetting'' parts of the objects to generalize (either by replacing sub-objects with variables or by dropping them altogether): the less syntactic information in an object, the more general it is. Most anti-unification methods are thus steered by a \textit{variabilization} algorithm determining how to ``forget'' object parts when necessary while keeping (common) parts in the generalization. Therefore, in general one is typically interested in computing what is often called a most specific generalization (or synonymously least general generalization), that is a generalization that captures a maximal amount of shared structure. With the atoms of the example above, the common generalization $p(f(X), Y)$ is in that regard a \textit{better} anti-unification result than $p(X,Y)$, as it exhibits more common structure (namely the use of functor $f$). As this example hints, ``better'' results are often obtained at the cost of more complex anti-unification algorithms. In that regard, computing more specific generalizations often boils down to performing some kind of optimization in the variabilization process. 

In a classical approach where goals are \textit{ordered} sequences of atoms, a goal $G$ is more general than some other goal $G'$ if $G'$ can be obtained by applying on $G$ some substitution $\theta$, being a mapping from variables to values. $G$ then typically harbors more variables than $G'$, making it a less instantiated, thus more general, version of $G'$. In that case, $G$ and $G'$ are related by the $\theta$-subsumption relation from~\cite{plotkin}, often considered to be a foundation of Inductive Logic Programming where anti-unification is used as a way to learn a general hypothesis from specific examples~\cite{ilp-theory-and-methods}. As the name may suggest, looking for a generalization that is common to a group of program artefacts (be it terms, atoms, goals or even predicates as a whole) is referred to as anti-unification due to it being the dual operation of unification. Both can, in fact, be applied in similar contexts. Such applications of (anti-)unification include program transformation techniques for partial deduction \cite{Gallagher:1993:TSL:154630.154640,DESCHREYE1999231}, fold/unfold routines \cite{DBLP:journals/csur/PettorossiP98}, invariant generation~\cite{DBLP:conf/synasc/KovacsJ05} and reuse of proofs~\cite{unranked-2-order-au,calculus-constr}. 

The study of anti-unification so far has mainly been focused on such ordered goals. However, many applications require goals to be defined as (\textit{unordered}) sets of atoms. It is the case, for instance, when considering the most declarative semantics of logic programs~\cite{lp-semantics,clp-semantics,horn-clauses-intermediate-representation}. Having a clear overview of anti-unification operators computing most specific generalizations for unordered goals (sometimes called \textit{linear} generalizations) in logic programs is necessary for generalization-driven semantic
% todo our own
clone detection with programs composed of constraint Horn clauses~\cite{clones,DBLP:conf/ppdp/MesnardPV16}. Indeed, generalization operators allow to quantify a certain amount of structural similarity between different predicate definitions by highlighting what parts these have in common. In~\cite{clones}, this quantitative similarity measurement is used as an indication of which semantic-preserving program transformation should be applied next in order to ultimately assess whether two programs (or predicates) are semantic clones. A quite similar approach has already been taken in the case of ordered goals in~\cite{au-applications}, an obvious application of this being plagiarism detection. 

Directing our interest towards unordered goals also has the advantage of broadening the traditional anti-unification theories usually rooted in a setting where logic programming is based on operational semantics, by extending the theories to the more general area of Constraint Logic Programming (CLP), unordered goals being a crucial ingredient of the CLP(X) framework. The fixpoint semantics of CLP programs are indeed typically defined with no regard to the order of appearance of the atoms in a clause's body~\cite{clp-semantics}. While CLP is interesting in its own right, it is also considered a serious candidate for representing abstract \textit{algorithmic knowledge}, rather than mere computations, in a quite universal manner~\cite{horn-clauses-intermediate-representation}. In that regard, focusing on unordered goals could pave the way for performing anti-unification at the algorithmic level rather than at the level of language-specific operations. 

The topic of anti-unification in the case of unordered goals has ocasionally come up in studies focussed on related fields such as \textit{equational} anti-unification, encompassing theories specified by commutativity or associative-commutativity axioms. The topic has been treated for first-order theories~\cite{order-sorted} as well as higher-order variants~\cite{kutsia_2020}. The latter work applies to the first-order case as well and provides polynomial algorithms for variants of anti-unification for unordered input. A grammar-based approach to equational anti-unification including commutative theories, called E-generalization, was introduced in~\cite{e-generalization} and refined with a working implementation in~\cite{e-generalization-improved}. The authors of~\cite{unranked-2-order-au} elaborate a \textit{rigid anti-unification} algorithm that can apply to unordered (and so-called \textit{unranked}) theories by instantiating a parameter called rigidity function, a direct application of which being the computation of longest common substrings. The algorithms described in all of these works can be used to compute what we will call $\sqsubseteq$-common generalizations below in the present paper. Although none of these works develop a general (non-equational) taxonomy allowing to extend the results beyond that simple setting, nor discusses variable- or injectivity-based variants of anti-unification operators, %-- both being concepts that will show central in the present paper -- 
their usages do point out other interesting (and recent) applications of anti-unification when focused on unordered goals, namely detection of recursion schemes in functional programs (as explained in~\cite{BARWELL2018669}) and techniques for learning bugfixes from software code repositories (an example being~\cite{rolim2018learning}). 

Anti-unification techniques that are adapted for CLP(X) have been defined in~\cite{gen}, but its focus is set on a polynomial abstraction procedure for a specific case where terms cannot be generalized (only variables can) and where generalization has to be carried out through injective substitutions. 
While~\cite{gen} provides useful insights and results, it lacks a more general and in-depth study of the used generalization operator. In this work we broaden, generalize and complete the latter work by providing a detailed and systematic study of generalization operators and their characteristics in the context of CLP. 

The main contributions of the present work are the following. In Section~\ref{section-preliminaries} we define relations close to the well-known $\theta$-subsumption in an effort of adapting this notion to the case of unordered goals. As will be illustrated throughout the paper, our adaption of anti-unification to unordered goals makes the usual subsumption techniques unusable. In Section~\ref{section-relation-1} we reframe the problem of looking for a most general/largest generalization as an optimization problem, parametrized by the \textit{generalization operator} (or anti-unification strategy) and \textit{variabilization function} (responsible for introducing variables in the resulting generalization) at hand.  We will see that given two unordered goals as input, searching for such generalizations can be done in polynomial time. The algorithms, as well as their worst-case time complexities, are detailed throughout the development of our anti-unification framework. 
%We study and characterize problem statements 
%and related algorithms, as well as their computability, 
%for several incarnations of the anti-unification problem in this setting. Indeed, 
%This new approach constitutes an in-depth study of anti-unification in the presence of unordered goals. 
%Its novelty comes from two main aspects. First, the definition of a general framework in which 
 In Section~\ref{section-relation-2} we provide an in-depth examination of several key variations of the anti-unification problem, namely variable generalization (where no terms are allowed to be generalized), injective generalization (where the generalizing substitutions need to be injective) and dataflow optimization (where the number of generalizing variables needs to be minimized) -- the latter of which is proved to make the anti-unification statement NP-hard. Finally, addressing this last problem more in depth in Section~\ref{section-relation-3} we revisit a tractable abstraction that was introduced in~\cite{gen} but we provide for the first time a formal proof of its worst-case complexity, showing that the approximation can effectively be computed in polynomially bounded time. With the exception of this last result, the proofs of propositions, lemmas and theorems are provided in the Appendices.

	\section{Preliminaries}\label{section-preliminaries}
In the following, we introduce concepts and notations that will be used throughout the paper. We suppose a language of Horn clauses defined over a context, which is a 4-tuple $\langle X, \mathcal{V}, \mathcal{F}, \mathcal{Q}\rangle$, where $X$ is a non-empty set of constant values, $\mathcal{V}$ is a set of variable names, $\mathcal{F}$ a set of function names and $\mathcal{Q}$ a set of predicate symbols. The sets $X, \mathcal{V}, \mathcal{F}$ and $\mathcal{Q}$ are all supposed to be disjoint sets. Symbols from $\mathcal{F}$ and $\mathcal{Q}$ have an associated arity (i.e. its number of arguments) and we write $f/n$ to represent a symbol $f$ having arity $n$. Given a context $\mathcal{C} = \langle X, \mathcal{V}, \mathcal{F}, \mathcal{Q}\rangle$, we define the set of \textit{terms} over it as $\mathcal{T}_\mathcal{C}= X \cup \mathcal{V} \cup \{f(t_1, t_2, \dots, t_n) | f/n \in \mathcal{F}\wedge\forall i \in 1..n : t_i \in \mathcal{T}_\mathcal{C}\}$. Terms are thus ground domain constants, variables and functor-based expressions over other terms. In what follows we will use uppercase symbols to represent variables whereas lowercase symbols will be used for function and predicate symbols. 
The set of \textit{atoms} over $\mathcal{C}$ is defined as $\mathcal{A}_\mathcal{C}=\{p(t_1,\ldots,t_n)\:|\:p/n\in \mathcal{Q} \wedge \forall i\in 1..n:t_i\in\mathcal{T}_\mathcal{C}\}$. An atom $p(t_1,\ldots,t_n)$ is understood as representing an atomic formula involving the predicate $p$ over $n$ arguments, the arguments being represented by terms.
%An atom is understood as a \textit{call} to a predicate $p/n$, each of the $n$ arguments of the call being terms. %For an atom $A = a(t_1, \dots, t_n)$, we will use $\lVert A \rVert$ to denote the arity $n$ of the predicate $a/n$ it is built upon.
%
A \textit{goal} $G$ is a set of atoms, representing an (unordered) conjunction, thus $G\subseteq\mathcal{A}_\mathcal{C}$. 
%A program $P$ is then defined over a context $\mathcal{C}$ as a set of constraint Horn clause definitions where each clause definition is of the form $p(V_1,\ldots,V_n)\leftarrow G$ with $p(V_1,\ldots,V_n)$ an atom where $\forall i \in 1..n:V_i\in\mathcal{V}$ called the head of the clause, and $G$ a goal called the body of the clause. 

\begin{example}\label{ex:syntax}
    Let us consider a numerical context (e.g. $X = \mathbb{Z}$ and $\mathcal{F}$ is the set of usual functions over integers composed of addition ($+$), substraction ($-$), integer division ($/$), multiplication ($*$) and modulo ($\%$)). Supposing $X$ and $Y$ to represent variables, then the following are terms: $3$, $X$, $+(3,X)$,  $+(4, *(X,\%(Y, 2)))$.
    Given predicates $p/1$, $q/1$, $r/2$ and $c/2$, the following are atoms: $p(3)$, $q(X)$, $r(+(2,4), +(3,X))$
    %Goals being sets of atoms, $\{p(3), r(+(2,4), +(3,X))\}$ is a goal.
\end{example}

%When the context is unambiguous, we will omit the $\mathcal{C}$ subscript of the above notations. In the paper, we consider working with an arbitrary context $\mathcal{C} = \langle X, \mathcal{V}, \mathcal{F}, \mathcal{Q}\rangle$. For the sake of clarity, in textual representations of program artefacts we will suppose that $\mathcal{V}$ is constituted of capitalized symbols while functors and predicate symbols will be typed in lowercase, as is the case in Example~\ref{ex:syntax}. 

%A fact is a clause with only constraints in its body. For a predicate symbol $p$, we use $\pred(p)$ to denote the definition of $p$ in the program at hand, i.e. the set of clauses having a head atom using $p$ as predicate symbol. Without loss of generality, we suppose that all clauses defining a predicate have the same head (i.e. use the same variables to represent the arguments).

In what follows we will often leave the underlying context implicit and simply talk about variables, function and predicate symbols. A \textit{substitution} is a mapping from variables to terms and will be denoted by a Greek letter. For any substitution $\sigma : \mathcal{V} \mapsto \mathcal{T}_\mathcal{C}$, $\dom(\sigma)$ represents its domain, $\img(\sigma)$ its image, and for a program expression $e$ (be it a term, an atom or a goal) and a substitution $\sigma$, we write $e\sigma$ to represent the result of substitution application, i.e. simultaneously replacing in $e$ those variables $V$ that are in $\dom(\sigma)$ by $\sigma(V)$. A \textit{renaming} is a special kind of substitution, mapping variables to variables only. Thus for any renaming $\rho$ we have that $\img(\rho) \subseteq \mathcal{V}$. %Two expressions $e_1$ and $e_2$ are \textit{variants}, denoted $e_1\sim e_2$ if and only if $e_1\rho=e_2$ and $e_1=e_2\rho^{-1}$ for some injective renaming $\rho$. 
%\begin{example}
%    Let us consider the goal $G = \{a(9,X), b(X,Y,W)\} $.
%    Applying the substitution $\sigma = [X\mapsto Z, Y\mapsto +(1,Z)]$ on $G$ yields $G\sigma = \{a(9,Z), b(Z,+(1,Z),W)\}$. %Similarly, applying the renaming $\rho = [X\mapsto C, Y\mapsto D, Z\mapsto E, W\mapsto F]$ on $G$ yields $G\rho = \{a(9,C), b(C,D,F)\}$.
%\end{example}
We can now define what constitutes a generalization relation $\sqsubseteq$, which essentially defines a goal as more general than another if the latter is a potentially larger and potentially more instantiated goal than the former. 

\begin{definition}%[Generalization]
	\label{def-generalization}
	Let $G$ and $G'$ be goals. $G$ is a \emph{generalization} of $G'$ if and only if there exists $\theta$, a substitution such that $G\theta \subseteq G'$. We denote this fact by $G\sqsubseteq G'$ (or sometimes $G\sqsubseteq_\theta G'$ if we want to emphasize the substitution $\theta$ in question).
\end{definition}

\begin{example}
	$\{p(X,Y,Z)\}, \{q(a(X))\}$ and $\{p(t(1), Y, u(Z)), q(W)\}$ are generalizations of $\{p(t(1), t(2), u(+(4,X))), q(a(t(u(1))))\}$.
\end{example} 

In some applications (e.g. for some usual computation domains in Constraint Logic Programming), it makes sense to use a more restricted generalization relation, in which variables are substituted by other variables rather than terms. As such, when the substitution $\theta$ in Definition~\ref{def-generalization} is a renaming, we say that $G$ is a \textit{variable generalization} of $G'$, which we denote by $G\preceq G'$ (or sometimes $G\preceq_\theta G'$ to emphasize the renaming $\theta$ in question). When considering the relation $\preceq$, only variables are generalized and the function symbols are considered as being a part of the language structure itself (i.e. they are not subject to generalization). This can be advantageous, for instance in applications working with a small finite domain such as Booleans, where considering $G = \{=\!\!(A,B)\}$ to be a generalization of both $\{=\!\!(X, true)\}$ and $\{=\!\!(Y, false)\}$ can feel like ignoring too much of the goal's semantics.
%
%\begin{example}
%	$\{r(A, t(3), u(+(B,2))), q(C)\}$ and $\{q(Z)\}$ are variable generalizations of $\{r(X,t(3), u(+(Y,2)), q(X)\}$. 
%\end{example}

Our generalization relations are variations of the classical $\theta$-subsumption~\cite{plotkin}, adapted to goals being sets rather than ordered sequences of atoms. They share the following property with $\theta$-subsumption.
\begin{proposition}\label{prop-quasi-order}
	Relations $\sqsubseteq$ and $\preceq$ are quasi-orders. 
\end{proposition}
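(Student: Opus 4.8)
The plan is to establish the two defining properties of a quasi-order --- reflexivity and transitivity --- separately but uniformly for the relations $\sqsubseteq$ and $\preceq$, isolating the single place where the restriction to renamings matters.

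For reflexivity I would take $\theta$ to be the identity (empty) substitution $\epsilon$, for which $G\epsilon = G$, so trivially $G\epsilon = G \subseteq G$ for every goal $G$, giving $G \sqsubseteq G$. Since $\epsilon$ maps each variable to itself (a variable), it is also a renaming, so the very same witness yields $G \preceq G$; hence both relations are reflexive.

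The substance is in transitivity. Suppose $G_1 \sqsubseteq_{\theta_1} G_2$ and $G_2 \sqsubseteq_{\theta_2} G_3$, i.e.\ $G_1\theta_1 \subseteq G_2$ and $G_2\theta_2 \subseteq G_3$. I would propose the composition $\theta = \theta_1\theta_2$ as the witness and argue $G_1\theta \subseteq G_3$ in two moves. The first --- the only genuinely load-bearing step --- is a monotonicity observation: applying a substitution preserves set inclusion, i.e.\ whenever $A \subseteq B$ as sets of atoms one has $A\sigma \subseteq B\sigma$. This holds because $A\sigma = \{a\sigma \mid a \in A\}$ and every $a \in A$ also lies in $B$, so each $a\sigma \in B\sigma$; crucially, this remains correct even if $\sigma$ identifies distinct atoms, since we claim only an inclusion of the resulting sets, not a preservation of cardinality. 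Applying this to $G_1\theta_1 \subseteq G_2$ with $\sigma = \theta_2$ gives $(G_1\theta_1)\theta_2 \subseteq G_2\theta_2$, and combining with $G_2\theta_2 \subseteq G_3$ and the associativity of substitution application, $G_1\theta = G_1(\theta_1\theta_2) = (G_1\theta_1)\theta_2 \subseteq G_2\theta_2 \subseteq G_3$, so $G_1 \sqsubseteq G_3$.

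To extend transitivity to $\preceq$, the only extra fact needed is that the composition of two renamings is again a renaming: if $\theta_1$ and $\theta_2$ map variables to variables, then for any variable $V$ the image $\theta_2(\theta_1(V))$ is a variable, since $\theta_1(V) \in \mathcal{V}$ and $\theta_2$ sends variables to variables; thus $\img(\theta_1\theta_2) \subseteq \mathcal{V}$. With this the argument above goes through verbatim for renaming witnesses, yielding $G_1 \preceq G_3$. I expect no real obstacle here: the only care required is to state the substitution-monotonicity lemma cleanly and to be explicit that the cardinality of a goal may drop under substitution without affecting the inclusion used in the proof.
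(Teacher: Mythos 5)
Your proof is correct and follows essentially the same route as the paper's: reflexivity via the empty/identity substitution, and transitivity by composing the two witnessing substitutions, using the fact that substitution application respects set inclusion (the paper phrases this same fact by writing each inclusion as a union $G_1\theta_1 \cup \Delta_1 = G_2$ and distributing $\theta_2$ over the union). Your treatment is in fact slightly more complete, since you explicitly verify the $\preceq$ case by noting that renamings are closed under composition, whereas the paper only remarks that the proof for $\preceq$ is similar.
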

%\begin{proof}
%	We will prove the result for relation $\sqsubseteq$, the proof for $\preceq$ being similar. We need to prove that $\sqsubseteq$ is reflexive and transitive. For reflexivity, it is obvious that since $G\subseteq G$ for any goal $G$, we have $G\sqsubseteq_\theta G$ for the empty substitution $\theta$. For transitivity, suppose that for goals $G_1$, $G_2$ and $G_3$, it holds that $G_1 \sqsubseteq_{\theta_1} G_2$ and $G_2 \sqsubseteq_{\theta_2} G_3$. Then by Definition~\ref{def-generalization}, there exist sets of atoms $\Delta_1$ and $\Delta_2$ such that $G_1\theta_1 \cup \Delta_1 = G_2$ and $G_2\theta_2\cup\Delta_2 = G_3$. In other words it holds that $(G_1\theta_1\cup\Delta_1)\theta_2\cup\Delta_2 = G_3$ or equivalently, $(G_1\theta_1)\theta_2 \cup (\Delta_1\theta_2 \cup\Delta_2) = G_3$. As the composition of two substitutions is a substitution, by defining $\theta_3 = \theta_2\circ\theta_1$ and $\Delta_3 = \Delta_1\theta_2 \cup\Delta_2$, we have $G_1\theta_3 \cup \Delta_3 = G_3$, so $G_1\sqsubseteq_{\theta_3} G_3$, which concludes the proof. 
%\end{proof}

We will now turn our attention towards the basic concept in anti-unification, namely that of a goal being a \textit{common generalization} of some given goals~\cite{plotkin}. In the following, we restrict ourselves to common generalizations of \textit{two} goals, but the concept can straightforwardly be extended to any number of goals. As for notation, when a result or definition holds for both our relations $\preceq$ and $\sqsubseteq$, for the sake of simplicity we will sometimes use $\leqslant$ to denote both relations at once. 

\begin{definition}%[Common generalization]
	\label{def-common-generalization} 
	Let $G_1,\dots,G_n$ be goals and $\leqslant$ a generalization relation. Then $G$ is a \emph{$\leqslant$-common generalization} of $\{G_1,\dots,G_n\}$ if and only if $\forall i \in 1..n : G\leqslant G_i$.
\end{definition}

The definition essentially states that each $G_i (1\le i\le n)$ can be generalized by $G$ through its own substitution. Formally there exist $\theta_1, \dots, \theta_n$ such that $\forall i \in 1..n : G \sqsubseteq_{\theta_i} G_i$. A common generalization of goals is thus, in essence, a part of their shared atomic structure, with a possible introduction of variables in certain places -- the liberality of which depends on the underlying relation. Note that renamings being (restricted) substitutions, for any two goals $G$ and $G'$ it holds that $G\preceq_\theta G' \Rightarrow G\sqsubseteq_\theta G'$ so that if a goal is a $\preceq$-common generalization of a set of goals it is also a $\sqsubseteq$-common generalization of said goals. 

\begin{example}\label{ex:common-gen}
	Let $G_1 = \{p(t(X), Y), q(3, f(X))\}$ and $G_2 = \{p(5, Z), q(3, f(Z))\}$. The following is a (non-exhaustive) list of $\sqsubseteq$-common generalizations of $G_1$ and $G_2$: $\emptyset, \{p(V_1, V_2)\}$, $\{q(3, f(V_1))\}, \{p(V_1, V_2), q(V_3, V_4)\}, \{p(V_1, V_2), q(3, V_3)\}$. The following are $\preceq$-common generalizations of $G_1$ and $G_2$ as well: $\emptyset, \{q(3, f(V_1))\}$. 
\end{example} 

As a slight lexical abuse, given atoms $\{A_1,\dot,A_n\}$ we will say that an atom $A$ is a $\leqslant$-common generalization of $\{A_1,\dot,A_n\}$ iff $\{A\}$ is a $\leqslant$-common generalization of $\{A_1,\dot,A_n\}$.
Note that no matter the relation and no matter the goals $G_1$ and $G_2$, at least one common generalization will always exist: the empty goal $\emptyset$. Obviously, wherever possible we are interested in more detailed representations of the common structure found in goals. 

For an expression $e$, we use $\vars(e)$ to represent the set of variables that appear in $e$ and $\terms(e)$ to denote the multiset of all atoms and non-variable terms occurring in $e$. We will sometimes refer to the cardinality of $\tau(e)$ as the $\tau$-value of $e$. The multiset of all atoms and terms, variables included, is denoted by $\allterms(e)$.
\begin{example}
	Let $G$ be the goal $\{p(f(x, Y)), q(Y,X)\}$. The multiset $\tau(G)$ is equal to $\{p(f(x,Y)), f(x,Y), x, q(Y,X)\}$. $G$'s $\tau$-value is $4$, $\vars(G) = \{X,Y\}$ and $\allterms(G)$ is the multiset $\{p(f(x,Y)), f(x,Y), x, Y, q(Y,X), Y, X\}$. 
\end{example} %For an expression $e$, a fresh renaming of $e$ is a variant of $e$ where all variables have been renamed to new, previously unused variables.

 One is typically interested in those common generalizations that are the \textit{most specific}, i.e. that capture as much common structure as possible amongst $G_1$ and $G_2$~\cite{plotkin}.  

\begin{definition}%[Most specific generalization]
	\label{def-msg}
	Given goals $G_1,\dots,G_n$ and $G$ such that $G$ is a $\leqslant$-common generalization of $\{G_1,\dots,G_n\}$, we say that $G$ is a \emph{$\leqslant$-most specific generalization ($\leqslant$-msg)} of $\{G_1,\dots,G_n\}$ if $\nexists G'$, another $\leqslant$-common generalization of $\{G_1,\dots,G_n\}$, such that $|\terms(G')|>|\terms(G)|$.
\end{definition}

\begin{example}\label{ex:msg}
	Consider again the goals $G_1$ and $G_2$ from Example~\ref{ex:common-gen}. It is easy to see that $G = \{p(V_1, V_2), q(3, f(V_3))\}$ has a higher $\terms$-value than all the other common generalizations listed in the example; $G$ is in fact a $\sqsubseteq$-msg of $G_1$ and $G_2$, and in this case, all other msg's of $G_1$ and $G_2$ differ from $G$ only in a renaming of the variables $V_1$, $V_2$ and $V_3$. As for relation $\preceq$, the goal $\{q(3, f(V_1))\}$ as well as its variants with $V_1$ renamed are $\preceq$-msg's of $G_1$ and $G_2$. 
\end{example} 

A weaker yet useful measure for comparing common generalizations is the number of atoms (i.e. the cardinality) of the common generalization $G$. 

\begin{definition}%[Largest common generalization]
	\label{def-mcg}
		Given goals $G_1,\dots,G_n$ and $G$ such that $G$ is a $\leqslant$-common generalization of $\{G_1,\dots,G_n\}$, we say that $G$ is a \emph{$\leqslant$-largest common generalization ($\leqslant$-lcg)} of $\{G_1,\dots,G_n\}$ if $\nexists G'$, another $\leqslant$-common generalization of $\{G_1,\dots,G_n\}$, such that $|G'|>|G|$.
\end{definition}

%The following example stresses the differences between msg and lcg.
\begin{example}\label{ex:lcg} 
	Let us again take a look at the goals from Example~\ref{ex:common-gen}. Each goal of size 2 (such as $\{p(V_1, V_2), q(V_3, V_4)\}$) is a $\sqsubseteq$-lcg, seeing that no larger $\sqsubseteq$-common generalization can exist as $|G_1| = |G_2| = 2$. Regarding the $\preceq$ relation, common generalizations of size 1 (e.g. $\{q(3, f(V_1))\}$) are the largest that exist in the example since the atoms involving $p/2$ have no $\preceq$-common generalization because of the structural difference in their first argument. 
\end{example} 

Before we can dive into the process of computing common generalizations, a few more preliminary observations need to be assessed regarding relations $\sqsubseteq$ and $\preceq$. First, 
%observe the following property that holds for both relations, essentially stating that a common generalization that is not a lcg has a direct extension obtained by the addition of one atom. 
%
%\begin{proposition}\label{prop-lcg-extensible}
%	Let $G_1, \dots, G_n$ and $G$ be goals such that $G$ is a $\leqslant$-common generalization, but not a $\leqslant$-lcg, of $\{G_1, \dots, G_n\}$. Then there exists an atom $A\notin G$ such that $G\cup\{A\}$ is a $\leqslant$-common generalization of $\{G_1, \dots, G_n\}$.
%\end{proposition} 
% Next,
we state that there is no other way for a common generalization to be most-specific than to harbor as many atoms as possible.
\begin{proposition}\label{prop-msg-lcg}\label{prop-msg-lcg-preceq}
	Any $\leqslant$-msg is a $\leqslant$-lcg and any $\preceq$-lcg is a $\preceq$-msg. 
\end{proposition}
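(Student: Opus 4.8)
The plan is to prove the two implications separately, in both cases by contradiction, relying on two elementary facts that I would record first. The first fact is that $|\terms(\cdot)|$ is additive over the atoms of a goal: since a goal is a set, its atoms are pairwise distinct, each atom $A$ is itself one element of $\terms(A)$, and $\terms(G)=\biguplus_{A\in G}\terms(A)$, so that $|\terms(G)|=\sum_{A\in G}|\terms(A)|$ with every summand at least $1$; in particular $\terms(G')\subseteq\terms(G)$ as multisets whenever the atom set of $G'$ is contained in that of $G$. The second fact is a \emph{freshening-and-union} lemma: if $G$ and $H$ are both $\leqslant$-common generalizations of $\{G_1,\dots,G_n\}$, then after renaming $\vars(H)$ to variables disjoint from $\vars(G)$ — obtaining $H'$, still a $\leqslant$-common generalization with the same cardinality and the same $\terms$-multiset — the goal $G\cup H'$ is again a $\leqslant$-common generalization. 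Indeed, writing $\theta_i,\psi_i$ for substitutions witnessing $G\leqslant G_i$ and $H'\leqslant G_i$, with $\dom(\theta_i)$ and $\dom(\psi_i)$ disjoint, the map $\theta_i\cup\psi_i$ is a well-defined substitution and $(G\cup H')(\theta_i\cup\psi_i)=G\theta_i\cup H'\psi_i\subseteq G_i$; moreover $\theta_i\cup\psi_i$ is a renaming whenever $\theta_i$ and $\psi_i$ are, so the construction stays inside whichever of $\sqsubseteq,\preceq$ we work with.

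For the first implication, \emph{any $\leqslant$-msg is a $\leqslant$-lcg}, I would argue contrapositively. Suppose $G$ is a $\leqslant$-common generalization that is not a $\leqslant$-lcg, so some $\leqslant$-common generalization $H$ has $|H|>|G|$. Freshen $H$ to $H'$ and form $G\cup H'$. Since $|H'|=|H|>|G|$, not every atom of $H'$ can occur in $G$, so $G\cup H'$ has at least one atom beyond those of $G$; by additivity $\terms(G)\subsetneq\terms(G\cup H')$, hence $|\terms(G\cup H')|>|\terms(G)|$. As $G\cup H'$ is a $\leqslant$-common generalization, $G$ cannot be a $\leqslant$-msg. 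This settles msg $\Rightarrow$ lcg for both relations simultaneously.

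For the second implication, \emph{any $\preceq$-lcg is a $\preceq$-msg}, I would again use the contrapositive and the same union, but read the inequalities in the other order. Suppose $G$ is a $\preceq$-common generalization that is not a $\preceq$-msg, so some $\preceq$-common generalization $H$ has $|\terms(H)|>|\terms(G)|$; freshen it to $H'$, which still satisfies $|\terms(H')|=|\terms(H)|>|\terms(G)|$. If every atom of $H'$ already occurred in $G$, then the atom set of $H'$ would be contained in that of $G$, and the first fact would force $|\terms(H')|\le|\terms(G)|$, a contradiction. Hence $H'$ contributes an atom absent from $G$, so the $\preceq$-common generalization $G\cup H'$ satisfies $|G\cup H'|>|G|$, contradicting that $G$ is a $\preceq$-lcg. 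Therefore $\preceq$-lcg $\Rightarrow$ $\preceq$-msg.

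I expect the main obstacle to be the freshening-and-union lemma rather than the counting. One must check carefully that merging the per-goal substitutions on disjoint variable domains really does keep $G\cup H'$ a common generalization and — the one place where the distinction between the relations is used — that the merged map remains a renaming in the $\preceq$ setting, so that $G\cup H'$ is a genuine $\preceq$-common generalization. Once that lemma is available, both implications follow from additivity of $|\terms(\cdot)|$ over the distinct atoms of a goal together with multiset containment; the second implication needs in addition the observation that any excess in $\terms$-value must be carried by an atom not already present, which is exactly what lets the union convert "more terms" into "more atoms".
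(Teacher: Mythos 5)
There is a genuine gap, and it sits exactly where you predicted: the freshening-and-union lemma. The tell-tale symptom is that your argument proves too much. Nothing in your proof of the second implication is specific to $\preceq$: the only place you invoke the distinction is to note that $\theta_i\cup\psi_i$ remains a renaming, but the analogous closure property (a union of substitutions on disjoint domains is a substitution) holds equally for $\sqsubseteq$. Your argument would therefore establish that every $\sqsubseteq$-lcg is a $\sqsubseteq$-msg --- a statement the paper explicitly refutes, immediately after the proposition, with $G_1 = \{a(Y,Z), a(t(1), X)\}$ and $G_2 = \{a(t(1), E)\}$: there $G = \{a(V_1,V_2)\}$ is a $\sqsubseteq$-lcg that is not a $\sqsubseteq$-msg, since $G' = \{a(t(1), V_1)\}$ has $\terms$-value $3 > 1$.

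Running your construction on that example shows where the lemma breaks. Taking $H = G'$ freshened to $H' = \{a(t(1),W)\}$, you would conclude that $G \cup H' = \{a(V_1,V_2), a(t(1),W)\}$ is a $\sqsubseteq$-common generalization, so that $G$ is not an lcg. Set-theoretically $(G\cup H')(\theta_2\cup\psi_2)\subseteq G_2$ does hold, but only because both atoms collapse onto the single atom $a(t(1),E)$ of $G_2$ --- and this collapse is precisely what the paper's operative notion of common generalization excludes. Throughout the paper (the claim in Example~\ref{ex:lcg} that no generalization can exceed $\min(|G_1|,|G_2|)$ atoms, the counterexample above, and the proof of Proposition~\ref{prop-lcg-extensible}, which tracks for each atom of the generalization \emph{the} atom of each $G_i$ it anti-unifies), distinct atoms of a generalization must be matched with distinct atoms of each generalized goal. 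Indeed, under the fully literal reading of Definition~\ref{def-generalization} that your lemma relies on, lcg's and msg's would generally not exist at all: any nonempty common generalization could be padded with fresh variants of its own atoms, increasing cardinality and $\terms$-value without bound, making the proposition vacuous. Under the intended matching semantics your merged substitution may identify an atom of $G$ with an atom of $H'$, so $G\cup H'$ need not be a common generalization, and both of your implications lose their key step. The paper instead proves an extension lemma (Proposition~\ref{prop-lcg-extensible}): a common generalization that is not an lcg can be enlarged by a \emph{single} atom, established through a careful switching rearrangement of the matching underlying a larger generalization. The first implication then follows because an added atom contributes at least $1$ to the $\terms$-value (a counting step that does match yours), while the second uses a genuinely $\preceq$-specific fact your proof never needs: atoms anti-unified through renamings all have equal $\terms$-value, so any $\terms$-value surplus must come from atoms not generalized in $G$, which can then be added to $G$, contradicting that $G$ is an lcg.
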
	

\begin{example} 
	Let us consider $G_1 = \{a(Y,Z), a(t(1), X)\}$ and $G_2 = \{a(t(1), E)\}$ as well as $G = \{a(t(1), V_1)\}$. It is easy to see that $G$ (and all its variations with $V_1$ renamed) is the only $\preceq$-lcg (thus $\preceq$-msg), as $G_2$'s atom can only be anti-unified with the atom in $G_1$ that has the same structure -- and so the same $\terms$-value. Here, $G$ is also a $\sqsubseteq$-msg (thus a  $\sqsubseteq$-lcg).
\end{example} 

Regarding $\sqsubseteq$, the converse of the above proposition ("any $\sqsubseteq$-lcg is a $\sqsubseteq$-msg") is not true, as shown by the following example.
Let us consider $G_1 = \{a(Y,Z), a(t(1), X)\}$ and $G_2 = \{a(t(1), E)\}$ as well as the following $\sqsubseteq$-lcg's: $G = \{a(V_1,V_2))\}$ and $G' = \{a(t(1), V_1)\}$. Obviously $|\tau(G')| = 3 >|\tau(G)| = 1$. In fact, $G'$ is a $\sqsubseteq$-msg for this example. 

For a set of goals $\{G_1,\dots,G_n\}$, we have defined most specific and largest generalizations using the plural. In fact, by the definitions above and as appears clearly in our examples, $G_1, \dots, G_n$ can have more than one $\preceq$-lcg (and equivalently $\preceq$-msg), but all are equivalent modulo a variable renaming. The same does not necessarily hold with the relation $\sqsubseteq$: %As outlined in Example~\ref{example-msg-lcg} for instance,
there might exist more than one sensibly different $\sqsubseteq$-lcg's, depending on the degree at which the different terms are abstracted away through the generalizations process. The following example shows that a similar observation holds for $\sqsubseteq$-msg's. 

\begin{example} 
	Consider the goals $G_1 = \{p(t,u)\}$ and $G_2 = \{p(t,X),p(X,u)\}$. There are two possible structures of $\sqsubseteq$-msg's, namely $\{p(t,V_1)\}$ and $\{p(V_1, u)\}$. There is one more possible structure of $\sqsubseteq$-lcg, namely $\{p(V_1, V_2)\}$
\end{example} 

For the sake of clarity, in the results and discussions that follow we will simplify and consider common generalizations of \textit{two} goals, but the ideas are straightforwardly applicable to groups of more than two goals. Furthermore, when discussing the generalization process of two goals we will suppose that the goals in question share no common variable name. This hypothesis is by no means a loss of generality as renaming all variables from one goal into fresh, unused variable names can ensure this property while not altering the goal's semantics.

	\section{Large and Specific Generalizations}\label{section-relation-1}

In this section we prove that msg's and lcg's as defined above can be computed with polynomial-time algorithms. First, we need the concept of a \textit{variabilization} which is basically a function mapping couples of terms to new variables. 

\begin{definition}%[Variabilization function]
	Given a context $\langle X, \mathcal{V}, \mathcal{F}, \mathcal{Q}\rangle$, let
	$V\subset \mathcal{V}$ denote a set of variables. A function $\Phi_V : \mathcal{T}^2\mapsto\mathcal{V}\cup X$ is called a \emph{variabilization function} if, for any $(t_1, t_2)\in\mathcal{T}^2$ it holds that if $\Phi_V(t_1,t_2) = v$, then
	$(1)\; v \notin V,\;(2)\;\nexists (t_1', t_2') \in \mathcal{T}^2 : (t_1', t_2') \neq (t_1, t_2) \wedge \Phi_V(t_1', t_2') = v, \; (3) \; v \in X \Leftrightarrow t_1 = t_2\in X$ and in that case, $v = t_1 = t_2$.
\end{definition}

Note that a variabilization function $\Phi_V$ introduces a new variable (not present in $V$) for any couple of terms, except when the terms are the same constant. It can thus be seen as a way to introduce new variable names when going through the process of anti-unifying two goals. In what follows, when manipulating goals $G_1$ and $G_2$, we will use $\Phi_{\vars(G_1\cup G_2)}$ to represent an arbitrary variabilization function. If the goals at hand are clearly identified from the context, we will abbreviate the notation to $\Phi$. In most upcoming examples we will use applications of $\Phi$ (e.g. $\Phi(X,Y), \Phi(t(X), 5),\dots$) rather than coined variable names (e.g. $V_1, V_2,\dots$) when an anti-unification operator is -- ostensibly or not -- at work.

Algorithm~\ref{algo-rel-1-lcg} shows the intuitive solution for computing a lcg with two goals $G_1$ and $G_2$ (where we suppose $|G_1|\le|G_2|$) as input. In the algorithm, $\au_\leqslant(A_1,A_2)$ denotes the use of a function that outputs a $\leqslant$-common generalization on the atomic level for atoms $A_1$ and $A_2$ with respect to relation $\leqslant$. In our development we will call such functions \textit{anti-unification operators}. %, defined as follows. 
As stated in the following observation, such operators exist for our relations.

\begin{algorithm}[hbtp]
	\caption{Computing a lcg $G$ for goals $G_1$ and $G_2$ with generalization relation $\leqslant$} 
	\label{algo-rel-1-lcg}
	\begin{algorithmic}
		\State $G = \{\}, R = \{\}$ 
		\For {each ($A_1 \in G_1$)}
		\For {each ($A_2 \in G_2\setminus R$)}
		\State $A_1'$ = $\au_\leqslant(A_1, A_2)$
		\If{$A_1' \neq \bot$}				
		\State $G \gets G\cup A_1'$
		\State $R \gets R\cup A_2$
		\State \textbf{break} out of the inner loop
		\EndIf
		\EndFor
		\EndFor
		\State \textbf{return} $G$
	\end{algorithmic}
\end{algorithm}

\begin{lemma}\label{lemma-au-op}
	There exist polynomial anti-unification operators to compute the $\leqslant$-lcg and/or the $\leqslant$-msg of two atoms. In particular for two atoms $A_1$ and $A_2$, there exist (1) an operator $\au_\sqsubseteq(A_1,A_2)$ computing a $\sqsubseteq$-lcg for ${A_1}$ and ${A_2}$ in $\mathcal{O}(n)$ with $n$ the arity of $A_1$; (2) an operator $\au_\preceq(A_1,A_2)$ computing a $\preceq$-lcg in $\mathcal{O}(m)$ with $m$ the maximum number of function applications in the argument terms of the atom $A_1$; (3) an operator $\dau_\sqsubseteq(A_1,A_2)$ computing a $\sqsubseteq$-msg with a complexity that is linear in the number of terms appearing in $A_1$.
\end{lemma}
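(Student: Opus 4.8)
The plan is to exhibit, for each of the three operators, an explicit recursive construction over the syntactic structure of the inputs and then to argue correctness, optimality, and running time separately. Write $A_1 = p(s_1,\dots,s_n)$ and $A_2 = q(u_1,\dots,u_{n'})$, and recall that, under the lexical convention fixed above, an atom generalizes $A_1$ and $A_2$ exactly when the singleton goal it forms generalizes $\{A_1\}$ and $\{A_2\}$. In all three cases the operator first tests whether $p/n = q/n'$ and returns $\bot$ otherwise: since both substitutions and renamings preserve the outermost predicate symbol, any atom $\leqslant$-generalizing both $A_1$ and $A_2$ must share their predicate and arity. This test is $\mathcal{O}(1)$; from now on assume $p/n = q/n$ and write $A_2 = p(u_1,\dots,u_n)$.

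For (1) I would set $\au_\sqsubseteq(A_1,A_2) = p(\Phi(s_1,u_1),\dots,\Phi(s_n,u_n))$, which by the definition of $\Phi$ places at each argument position either a fresh variable or, when $s_i = u_i$ is a common constant, that constant. The substitution sending each introduced variable $\Phi(s_i,u_i)$ to $s_i$ (respectively $u_i$) maps this atom onto $A_1$ (respectively $A_2$), so it is a $\sqsubseteq$-common generalization of the two atoms; since every atom-valued common generalization has cardinality one, it is trivially a $\sqsubseteq$-lcg. Only the $n$ top-level positions are inspected and each evaluation of $\Phi$ is $\mathcal{O}(1)$, giving $\mathcal{O}(n)$.

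For (2) the governing observation is that a renaming fixes every function symbol and constant, so an atom that $\preceq$-generalizes both $A_1$ and $A_2$ must agree with each of them at every position not occupied by a variable in both atoms simultaneously. I would therefore compute $\au_\preceq$ by a synchronous descent through $A_1$ and $A_2$: on a shared outermost function symbol, keep it and recurse on the corresponding argument pairs; on a shared constant, keep it; at a position carrying a variable in both atoms, emit $\Phi$ of that pair; and on any remaining disagreement -- distinct function symbols, distinct constants, or a variable facing a non-variable term -- return $\bot$, since no renaming can then reconcile the two atoms. The freshly introduced variables may be renamed independently to recover either atom, so the output is a $\preceq$-common generalization; being an atom it is again maximal in cardinality and hence a $\preceq$-lcg, so by Proposition~\ref{prop-msg-lcg-preceq} also a $\preceq$-msg. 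The descent visits each function application of $A_1$ once, matching the stated $\mathcal{O}(m)$ bound.

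For (3) I would adopt the standard first-order anti-unification recursion for $\dau_\sqsubseteq$: at each term pair, keep a shared outermost function symbol and recurse, keep a shared constant, and otherwise emit $\Phi$ of the pair; a routine back-substitution argument shows the resulting atom $\sqsubseteq$-generalizes both inputs. The main obstacle is optimality in the $\terms$-measure, namely that no atom $\sqsubseteq$-generalizing both $A_1$ and $A_2$ has a strictly larger $\terms$-value. I would prove this positionally, by induction on term structure: for two terms $s,u$ the maximal number of non-variable subterms that a common generalizing term can retain is $0$ when their outermost symbols differ -- because a non-variable generalizer would force those symbols to coincide, since substitution never alters the outermost functor of a non-variable term, so only a variable can occupy that position -- it is $1$ when $s = u$ is a constant, and it is $1$ plus the sum of the per-argument maxima when $s$ and $u$ share an outermost functor. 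The recursion realizes this bound at every position and the contributions of disjoint positions simply add, so it realizes the global maximum. A convenient simplification, unlike the classical least general generalization, is that the $\terms$-measure ignores variables altogether, so the consistency of $\Phi$ across repeated term pairs is irrelevant to optimality. Finally, the recursion touches each subterm of $A_1$ exactly once, so its cost is linear in the number of terms occurring in $A_1$.
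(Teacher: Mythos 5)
Your construction coincides with the paper's own proof: the same three operators are defined there, namely top-level variabilization $p(\Phi(s_1,u_1),\dots,\Phi(s_n,u_n))$ for $\au_\sqsubseteq$, a synchronous descent that fails on any non-variable disagreement for $\au_\preceq$, and the standard keep-common-structure/variabilize-on-mismatch recursion for $\dau_\sqsubseteq$, with the same complexity accounting. Your added positional induction for the $\terms$-optimality of $\dau_\sqsubseteq$ (and the back-substitution checks) only makes explicit what the paper dismisses as ``easy to see,'' so the proposal is correct and essentially identical in approach.
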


Algorithm~\ref{algo-rel-1-lcg} merely applies a given anti-unification operator to pairs of atoms and keeps the results (if not $\bot$) in the generalization under construction, leading to the conclusion:
\begin{theorem}\label{thm-ausqsubseteq}
	Given two goals $G_1$ and $G_2$, Algorithm~\ref{algo-rel-1-lcg} can compute (1) a $\sqsubseteq$-lcg in $\mathcal{O}(|G_1|\cdot |G_2|\cdot N)$ with $N$ the maximum arity of the predicate symbols occurring in $G_1$ and $G_2$; (2) a $\preceq$-lcg in $\mathcal{O}(|G_1|\cdot |G_2|\cdot N)$ with $M = \underset{A\in G_1}{\max}\{|\allterms(A)|\}$. 
	%$N = \underset{a(t_1,\dots, t_n)\in G_1}{\max}\{n\}$. 
\end{theorem}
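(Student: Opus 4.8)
The plan is to establish two things separately: that the goal $G$ returned by Algorithm~\ref{algo-rel-1-lcg} is indeed a $\leqslant$-common generalization of $G_1$ and $G_2$, and that its cardinality is maximal; the two running-time bounds then follow by bookkeeping on the nested loops together with Lemma~\ref{lemma-au-op}.

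For correctness, I would argue as follows. Each atom $A_1'$ that the algorithm inserts into $G$ is, by Lemma~\ref{lemma-au-op}, a $\leqslant$-common generalization of a pair $(A_1,A_2)$, so there are witnessing substitutions (renamings for $\preceq$) with $A_1'\leqslant A_1$ and $A_1'\leqslant A_2$. The outer loop visits each $A_1\in G_1$ at most once, and the set $R$ guarantees that each $A_2\in G_2$ is consumed at most once, so the chosen pairs use pairwise distinct atoms on both sides. Because every non-matching position is filled by a variabilization value $\Phi(s,t)$ that depends only on the pair $(s,t)$, the per-atom witnessing substitutions agree on any shared variable and can be amalgamated into two global substitutions $\theta_1,\theta_2$ with $\Phi(s,t)\theta_1=s$ and $\Phi(s,t)\theta_2=t$. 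Hence $G\theta_1\subseteq G_1$ and $G\theta_2\subseteq G_2$, i.e. $G$ is a $\leqslant$-common generalization.

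The heart of the proof is maximality. I would model compatibility as a bipartite graph $H$ on $G_1\uplus G_2$ in which $A_1$ and $A_2$ are adjacent exactly when $\au_\leqslant(A_1,A_2)\neq\bot$; a (reduced) common generalization whose atoms map injectively into each goal then corresponds to a matching in $H$, and conversely, so the size of a largest common generalization equals the maximum matching of $H$. The key observation is that for both relations this adjacency is an \emph{equivalence relation} on atoms: for $\sqsubseteq$ two atoms are compatible iff they share predicate symbol and arity, and for $\preceq$ iff they have the same non-variable skeleton with variables occurring at the same positions -- in either case the relation is reflexive, symmetric and transitive. Consequently $H$ decomposes into a disjoint union of complete bipartite graphs, one per equivalence class. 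On such a graph the greedy strategy implemented by the algorithm is optimal: scanning the $G_1$-atoms one by one, an atom of class $c$ fails to be matched only once every $G_2$-atom of class $c$ has been consumed, so each class contributes $\min(|G_1\cap c|,|G_2\cap c|)$ matched pairs, which is exactly the maximum. I expect this to be the main obstacle, since for arbitrary graphs greedy matching is only a $2$-approximation; the whole argument rests on the equivalence-relation structure of $\leqslant$-compatibility (and on the convention, implicit in the examples, that common generalizations are taken without redundant atoms, so that cardinality coincides with matching size).

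Finally, for the complexity I would simply count. The outer loop runs $|G_1|$ times and the inner loop at most $|G_2|$ times, and each iteration performs a single call to the atomic operator. Maintaining $R$ by marking consumed atoms, so that testing membership in $G_2\setminus R$ and the updates $G\gets G\cup A_1'$, $R\gets R\cup A_2$ each cost $\mathcal{O}(1)$, keeps the loop overhead dominated by those operator calls. By Lemma~\ref{lemma-au-op} one call costs $\mathcal{O}(N)$ for $\au_\sqsubseteq$, with $N$ the maximal predicate arity occurring in $G_1$ and $G_2$, and $\mathcal{O}(M)$ for $\au_\preceq$, with $M=\max_{A\in G_1}\{|\allterms(A)|\}$ bounding the number of function applications in an argument of $A_1$. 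Multiplying the loop bound by the per-call cost yields the claimed $\mathcal{O}(|G_1|\cdot|G_2|\cdot N)$ and $\mathcal{O}(|G_1|\cdot|G_2|\cdot M)$ bounds.
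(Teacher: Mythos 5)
Your proof is correct, but it does substantially more than the paper's own proof, which consists essentially of your final bookkeeping paragraph alone: the paper observes that one call to $\au_\sqsubseteq$ costs $\mathcal{O}(N)$ (resp. $\au_\preceq$ costs $\mathcal{O}(M)$), that at most $|G_1|\cdot|G_2|$ calls are made, and stops there, taking for granted (via the remark preceding the theorem that the algorithm ``merely applies a given anti-unification operator to pairs of atoms and keeps the results'') that the greedy output is indeed an lcg. Your matching analysis fills exactly that omitted step, and it is the part with real mathematical content: you reduce lcg size to maximum matching in the compatibility graph, and you observe that $\leqslant$-compatibility of atoms is an \emph{equivalence relation} (same predicate symbol and arity for $\sqsubseteq$, same non-variable skeleton for $\preceq$), so the graph is a disjoint union of complete bipartite graphs, on which any maximal matching -- in particular the greedy one -- attains the maximum $\sum_c \min(|G_1\cap c|,|G_2\cap c|)$. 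This also explains a contrast the paper leaves implicit: greedy suffices for lcg's precisely because of this equivalence-class structure, whereas for $\sqsubseteq$-msg's the edge weights break it and Theorem~\ref{thm-sqsubseteq-msg} must resort to maximum-weight matching. Your parenthetical caveat about redundant atoms is also well taken and touches a genuine wrinkle in the paper: read literally, Definition~\ref{def-generalization} allows a substitution to merge distinct atoms of $G$, so arbitrarily large common generalizations exist (e.g. $\{p(V_1,V_2),p(V_3,V_4),p(V_5,V_6)\}$ generalizes both goals of Example~\ref{ex:common-gen}); the paper silently assumes, as in Example~\ref{ex:lcg}, that a common generalization has at most $\min(|G_1|,|G_2|)$ atoms, and your injectivity convention is what makes that assumption -- and hence the theorem itself -- hold. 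Finally, your reading of item (2) as the bound $\mathcal{O}(|G_1|\cdot|G_2|\cdot M)$ matches the evident intent of the statement, whose ``$N$'' there is a typo.
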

%\begin{proof}
%	Obviously the $\au_\sqsubseteq(A_1,A_2)$ operation can be achieved in a time linear with respect to the arity $n$ of $A_1$. In the worst case, the operation needs to be performed for each atom in $G_1$ with respect to each atom in $G_2$. Hence the result.
%\end{proof}

Note that although Algorithm~\ref{algo-rel-1-lcg} is able to find a $\sqsubseteq$-lcg for two goals $G_1$ and $G_2$, it can produce different lcg's depending on the order in which the atoms of $G_1$ and $G_2$ are considered. %The following result states that the same algorithm can be used to compute $\preceq$-lcgs in polynomially bounded time.
%Obviously $\au_\preceq$ is an anti-unification operator based on relation $\preceq$ when applied on atoms. As it is defined recursively on terms, we have the following result.
%\begin{theorem}\label{thm-preceq-lcg}
%	
%\end{theorem}
%\begin{proof}
%	It is easy to see that the $\au_\preceq(A_1,A_2)$ operation can be achieved in linear time with respect to the maximum number of function applications in the argument terms of the atom $A_1$ under scrutiny. In the worst case, the operation needs to be performed for each atom in $G_1$ with respect to each atom in $G_2$. Hence the result.
%\end{proof}
Although the $\preceq$-lcg computed by Algorithm~\ref{algo-rel-1-lcg} is necessarily a $\preceq$-msg (according to Proposition~\ref{prop-msg-lcg-preceq}), the same observation does not hold when the underlying relation is $\sqsubseteq$ and the anti-unification operator is adapted accordingly. The fact that Algorithm~\ref{algo-rel-1-lcg} can miss out on a $\sqsubseteq$-msg is due to the algorithm itself not trying to match those pairs of atoms $(A_1, A_2)$ that share as much structure as possible. Therefore, finding a $\sqsubseteq$-lcg with maximal $\terms$-value (i.e. a $\sqsubseteq$-msg) can be seen as an optimization problem.

%\begin{example}
%	Let us once more consider the anti-unification of the atoms introduced in Example~\ref{ex-au-sq}. This time we make use of $\dau_\sqsubseteq^\Phi$ with $\Phi$ a given variabilization function, to anti-unify the three pairs of atoms. Notice how the operator preserves as much non-variable atomic structure as possible in the process.
%	
%	\begin{center}
%		\begin{tabular}{l|l|l}
%			%\hline 
%			$\bm{A_1}$ & $\bm{A_2}$ & $\bm{\dau_\sqsubseteq^\Phi(A_1, A_2)}$\\\hline 
%			$p(X, 5, q(Y,4))$ & $p(W,t(Z))$ & $\bot$\\\hline 
%			$p(r(X,3), t(5))$ & $p(W, t(Z))$ & $p(\Phi(r(X,3),W),t(\Phi(5,Z)))$\\\hline 
%			$p(r(X,3), t(Y))$ & $p(r(W,3),t(Z))$ & $p(r(\Phi(X,W),3), t(\Phi(Y,Z)))$ %\\\hline 
%		\end{tabular} 
%	\end{center}
%\end{example}
%

Indeed, applying Algorithm~\ref{algo-rel-1-lcg} as-is does not guarantee that the matched atoms from $G_1$ and $G_2$ are chosen in a way that optimizes the output's $\terms$-value. The algorithm should be adapted in such a way that first, the anti-unification of $A_1$ and $A_2$ is computed for \textit{all} $A_1\in G_1$ and $A_2\in G_2$; then, there must be a selection of pairs of atoms so that the resulting generalization has a maximized $\terms$-value. This is similar to the well-known assignment problem, and can consequently be solved by existing maximization matching algorithms~\cite{CATTRYSSE1992260}. Indeed, with $G_1$ and $G_2$ the goals at hand, our problem can be characterized by drawing a weighted bipartite graph with as left vertexes the atoms of $G_1$ and as right vertexes the atoms of $G_2$. When considering as granted an operator $\dau\footnote{For \textit{deep anti-unification}}_\sqsubseteq$ computing a $\sqsubseteq$-msg for two atoms, an edge between two vertexes $A_1$ and $A_2$ has an associated weight $w$ indicating the potential benefit (in number of terms and predicate symbols) of anti-unifying $A_1$ and $A_2$, formally defined as 
\[w(A_1,A_2)=\left\{\begin{array}{ll}
-1 & \mbox{if } \dau_\sqsubseteq(A_1, A_2) = \bot
\\ |\terms(\dau_\sqsubseteq(A_1,A_2))| & \mbox{otherwise}
\end{array}\right.\]

Since all edges are labeled by a measurement of their $\terms$-value, the maximum weight matching (MWM) in the bipartite graph will give the selection of pairs of atoms that, once properly anti-unified, keep the maximal structure in the generalization. Observe that by giving negative scores to atom couples that do not anti-unify, we prevent these couples from playing any part in the computed generalization. %The entire process is described in Algorithm~\ref{algo-rel-1-mwm}.

\begin{example}\label{example-mwm}
	Let us consider the goals $G_1 = \{p(X, t(4)), r(u(5,s(Y)),8), r(u(8,Z),5)\}$ and $G_2 = \{p(A), r(u(8,s(3)),5)\}$. The corresponding assignment problem is shown in Figure~\ref{fig:mwm}. The MWM consists of the sole edge $(r(u(8, Z), 5), r(u(8, s(3)), 5)$, so that the resulting generalization for this simple example is $G = \{r(u(8, \Phi(Z, s(3))), 5)\}$.
\end{example}

\begin{figure}
%	\hspace{-0.3cm}
	\begin{tikzpicture}[thick,
	fsnode/.style={draw, circle},
	ssnode/.style={draw, circle},
	every fit/.style={ellipse,text width=3cm},
	-,shorten >= 3pt,shorten <= 3pt,
	]
	
	% the vertices of U
	\begin{scope}[start chain=going below,node distance=6mm]
	\node[fsnode,on chain] (g1) [label=left: {$p(X, t(4))$}] {};
	\node[fsnode,on chain] (g2) [label=left: {$r(u(5,s(Y)),8)$}] {};
	\node[fsnode,on chain] (g3) [label=left: {$r(u(8,Z),5)$}] {};
	\end{scope}
	
	% the vertices of V
	\begin{scope}[xshift=3.6cm,start chain=going below,node distance=10mm]
	\node[ssnode,on chain] (h1) [label=right: {$p(A)$}] {};
	\node[ssnode,on chain] (h2) [label=right: {$r(u(8,s(3)),5)$}] {};
	\end{scope}
	
	% the set U
	\node [black, fit=(g1) (g3),label=above:{$G_1$}] {};
	% the set V
	\node [black, fit=(h1) (h2),label=above:{$G_2$}] {};
	
	\path (g1) edge[bend left] node [fill=white] {$-1$} (h1);
	\path (g1) edge node [fill=white] {$-1$} (h2);
	
	\path (g2) edge[bend left] node [fill=white] {$-1$} (h1);
	\path (g2) edge node [fill=white] {$3$} (h2);
	
	\path (g3) edge[bend right] node [fill=white] {$-1$} (h1);
	\path (g3) edge[bend right] node [fill=white] {$4$} (h2);
	]	
	\end{tikzpicture}
	\caption{The bipartite graph for the assignment problem from Example~\ref{example-mwm}}
	\label{fig:mwm}
	
\end{figure}
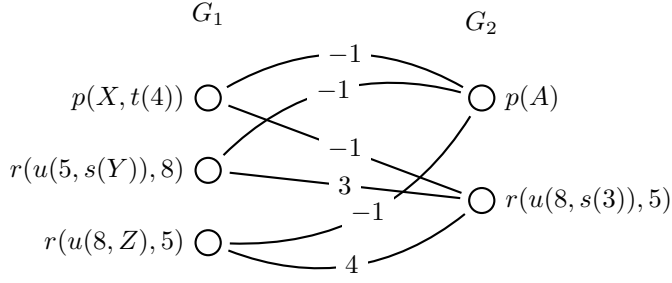

%\begin{algorithm}[hbtp]
%	\caption{Computing an msg $G$ for goals $G_1$ and $G_2$ with relation $\leqslant$} 
%	\label{algo-rel-1-mwm}
%	\begin{algorithmic}
%		\State $G = \{\}$ 
%		\State $E = \{\}$
%		\For {each ($A_1 \in G_1$)}
%			\For {each ($A_2 \in G_2$)}
%				\State $s\gets$ $\score(A_1, A_2)$
%				\State $E \gets E \cup \{(A_1, A_2, \mbox{s})\}$
%			\EndFor
%		\EndFor
%		\State $H \gets \MWM(E)$
%		\For {each $(A_1,A_2,s) \in H$} 
%			\State $G\gets G\cup\au_\leqslant(A_1,A_2)$ 
%		\EndFor
%	\end{algorithmic}
%\end{algorithm}

\begin{theorem}\label{thm-sqsubseteq-msg}
	Let $G_1$ and $G_2$ be goals and $N = \underset{A\in G_1}{\max}\{|\allterms(A)|\}$. Then a $\sqsubseteq$-msg of $G_1$ and $G_2$ can be computed in $\mathcal{O}\big(|G_1|\cdot |G_2|\cdot N +max(|G_1|,|G_2|)^3\big)$.
\end{theorem}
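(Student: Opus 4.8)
The plan is to recast the search for a $\sqsubseteq$-msg as a maximum weight matching problem, exactly as motivated above, and then to account separately for the cost of building the weighted bipartite graph and the cost of solving the matching. Concretely, I would first compute $\dau_\sqsubseteq(A_1,A_2)$ for every pair $(A_1,A_2)\in G_1\times G_2$, label the edge between the corresponding vertices with the weight $w(A_1,A_2)$ defined above, then run a maximum weight matching (MWM) algorithm on this bipartite graph; the output generalization is obtained by collecting $\dau_\sqsubseteq(A_1,A_2)$ for every matched edge of positive weight. Correctness then amounts to proving that the maximal $\terms$-value attainable by a $\sqsubseteq$-common generalization coincides with the weight of an MWM.

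The easy direction is that any matching induces a $\sqsubseteq$-common generalization whose $\terms$-value equals the total weight of its positive edges. For each matched pair $(A_1,A_2)$ we add the atom $\dau_\sqsubseteq(A_1,A_2)$ to $G$; because $\Phi$ only ever introduces fresh variables (conditions $(1)$--$(2)$ of the variabilization function), the per-edge witnessing substitutions combine into two global substitutions $\theta_1,\theta_2$ with $G\theta_1\subseteq G_1$ and $G\theta_2\subseteq G_2$. Here it is essential that a matching uses each atom of $G_1$ and of $G_2$ at most once, so no introduced variable is forced to take two incompatible values. Since $|\terms(G)|$ is additive over the atoms of $G$ (the multiset $\terms$ of a goal being the multiset union of the $\terms$ of its atoms), this shows that the weight of the MWM is a lower bound on the $\terms$-value of any $\sqsubseteq$-msg.

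The converse inequality is where the real work lies, and I expect it to be the main obstacle. Given an arbitrary $\sqsubseteq$-common generalization $G$ with witnessing $\theta_1,\theta_2$, each atom $A\in G$ satisfies $A\theta_1\in G_1$ and $A\theta_2\in G_2$, so $A$ is an atom-level $\sqsubseteq$-common generalization of $A\theta_1$ and $A\theta_2$; by the atom-level optimality of $\dau_\sqsubseteq$ from Lemma~\ref{lemma-au-op} this yields $|\terms(A)|\le w(A\theta_1,A\theta_2)$, with the edge weight positive since $\dau_\sqsubseteq(A\theta_1,A\theta_2)\neq\bot$. Summing over $A\in G$ bounds $|\terms(G)|$ by the weight of the edge \emph{set} $\{(A\theta_1,A\theta_2)\mid A\in G\}$, but to bound it by the weight of a matching I must first argue that, without loss of generality, the map $A\mapsto(A\theta_1,A\theta_2)$ can be taken injective on both coordinates, i.e. that distinct atoms of an msg originate from distinct atoms of $G_1$ and from distinct atoms of $G_2$. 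The delicate point is ruling out that ``over-covering'' a single source atom by several generalization atoms could ever help; I would attempt this through a normalisation lemma stating that every $\sqsubseteq$-msg is $\sqsubseteq$-equivalent to a reduced one whose atom correspondence is a matching and whose $\terms$-value is no smaller. Establishing this reduction cleanly, and verifying it does not shrink the $\terms$-value, is the crux of the proof; once the correspondence is a matching, summing the per-atom bounds gives $|\terms(G)|\le \text{weight of the MWM}$, and together with the easy direction the collected MWM generalization is an msg.

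For the complexity bound I would split the cost in two. Computing all $|G_1|\cdot|G_2|$ edge weights costs $\mathcal{O}(|G_1|\cdot|G_2|\cdot N)$, since by Lemma~\ref{lemma-au-op} each call $\dau_\sqsubseteq(A_1,A_2)$ runs in time linear in the number of terms of $A_1$, itself bounded by $N=\max_{A\in G_1}|\allterms(A)|$. Solving the resulting assignment problem, after padding to a square instance with $\max(|G_1|,|G_2|)$ vertices on each side, costs $\mathcal{O}(\max(|G_1|,|G_2|)^3)$ using a standard Hungarian-type algorithm~\cite{CATTRYSSE1992260}. Adding the two contributions yields the stated bound $\mathcal{O}\big(|G_1|\cdot|G_2|\cdot N+\max(|G_1|,|G_2|)^3\big)$.
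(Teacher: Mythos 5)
Your proposal takes exactly the paper's route, and your complexity accounting coincides with the paper's proof step for step: compute $\dau_\sqsubseteq(A_1,A_2)$ for every pair in $G_1\times G_2$ to obtain the edge weights in $\mathcal{O}(|G_1|\cdot|G_2|\cdot N)$ total (per-call linearity coming from Lemma~\ref{lemma-au-op}), then solve the resulting assignment problem with a Hungarian-type algorithm in $\mathcal{O}(\max(|G_1|,|G_2|)^3)$. The only divergence is one of scope: the paper's proof consists of \emph{nothing but} these two cost estimates. The claim that a maximum weight matching actually yields a $\sqsubseteq$-msg is inherited, without argument, from the informal discussion preceding the theorem, so the ``converse inequality'' you correctly single out as the crux --- normalising an arbitrary common generalization so that the correspondence $A\mapsto(A\theta_1,A\theta_2)$ is injective in both coordinates without decreasing the $\terms$-value --- is simply absent from the paper. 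You have therefore not failed to reconstruct any idea the paper supplies; your attempt establishes everything the paper's proof does, plus the easy direction of correctness.

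Your instinct that the normalisation step is delicate is, moreover, better founded than you may realise: under the paper's literal definitions the lemma you want is false. Take $G_1=\{p(a)\}$ and $G_2=\{p(b)\}$. The goal $G=\{p(X),p(Y)\}$ satisfies $G[X\mapsto a,\,Y\mapsto a]=\{p(a)\}\subseteq G_1$ and $G[X\mapsto b,\,Y\mapsto b]=\{p(b)\}\subseteq G_2$ (substitution application collapses the two atoms into one, and Definition~\ref{def-generalization} asks only for set inclusion), so $G$ is a $\sqsubseteq$-common generalization with $|\terms(G)|=2$, strictly beating the matching-based $\{p(\Phi(a,b))\}$, whose $\terms$-value is $1$; adding further copies $p(X_i)$ pumps the $\terms$-value arbitrarily, so that, read literally, no msg (and no lcg) exists at all in this example. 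The MWM characterisation --- and statements such as Example~\ref{ex:lcg}, which asserts a generalization can never be larger than the input goals --- is thus correct only under a tacit assumption that the atom correspondence of a generalization is injective. Under that intended reading your normalisation lemma becomes vacuous and your two inequalities complete a genuine correctness proof; under the literal reading no argument along these lines can succeed. Either way, the gap you flag is a gap in the paper's treatment, not a defect introduced by your attempt.
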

%\begin{proof}
%	First note how the atomic anti-unifications and the weights of the associated bipartite graph's edges can be computed simultaneously, by working out $\dau_\sqsubseteq(A_1,A_2)$ for each possible couple $(A_1,A_2)$ in $G_1\times G_2$ and keeping account of the number of non-variable terms encountered during the operation (or $-1$). Given that $\dau_\sqsubseteq(A_1,A_2)$ can obviously operate linearly in the number of terms appearing in $A_1$ (denoted $N$), the computation of all weights is carried out in a time not exceeding $\mathcal{O}(|G_1|.|G_2|.N)$.
%	
%	Now the obtained assignment problem can be solved by existing algorithms (such as the Hungarian method~\cite{assignment}) that compute a MWM in $\mathcal{O}(n^3)$, where $n$ is the number of vertexes appearing on the side of the bipartite graph that has the most vertexes. In our case, there are $|G_1|$ left vertexes and $|G_2|$ right vertexes so that a MWM algorithm can be ran in $\mathcal{O}(max(|G_1|,|G_2|)^3)$.
%\end{proof}

Note that the process described above finds \textit{a} $\sqsubseteq$-msg but there is no guarantee regarding \textit{which} $\sqsubseteq$-msg is found: as previously observed, the maximal $\terms$-value can sometimes be reached through different atomic structures. Another inconstant parameter from one msg to the other is the number of \textit{different} variables that are introduced in the generalization process. In fact, both aspects can sometimes be related, for example when minimizing the number of variables leads to the choice of a certain msg structure over another. A $\sqsubseteq$-most specific generalization that has \textit{as few} different variables as possible is often seen as an even more specific generalization; the computation of such a msg is the main topic of the following section.
%However, notice that invariably, all $\sqsubseteq$-msg's of two goals $G_1$ and $G_2$ only differ in their variable names, i.e. two $\sqsubseteq$-msg's necessarily harbor the same atomic structure, due to the matching process by essence trying to match in priority the atoms that have the heaviest structure.
	
	%%%%%%%%%%%%%%%%%%%%%%%%%%%%%%%%%%%%%%%%%%%%%%%%%%%
	%%%%%%%%%%%%%%%%%%%%%%%%%%%%%%%%%%%%%%%% 
	%%%%%%%%%%%%%%%%%%%%%%%%%%%%%%%%%%%%%%%%%%%%%%%%%%%
	\section{Dataflow Optimization}\label{section-relation-2}

Relations $\sqsubseteq$ and $\preceq$ are defined over substitutions that do not necessarily need to be \textit{injective}. Indeed, a single term occurring multiple times in one of the goals can potentially be generalized by two (or more) different variables. Therefore, some most specific generalizations may contain more different variables than others depending on the underlying variabilization process. 
Among two common generalizations of the same pair of goals, the common generalization that has more variables than the other can be considered \textit{less specific} as some information -- namely the fact that two or more values, possibly in different atoms, are equal -- has been abstracted by introducing different variables. In what follows, we will call the search of a common generalization with as few different variables as possible \textit{dataflow optimization}. The following example illustrates the concept over the finite domain from~\cite{clpbfd}. 

\begin{example}\label{ex:dataflow-bools}
	Consider the domain of Booleans $\mathbb{B} = \{true, false\}$ as well as the following goals: $G_1 = \{=\!\!(X, or(Y, Z)), =\!\!(V, and(Y, Z))\}$ and $G_2 = \{=\!\!(B, or(C, D)), =\!\!(A, and(C, D)), =\!\!(E, and(F, G))\}$. 
	Note that in $G_1$ the \textit{or} and \textit{and} operations are evaluated on the same values, represented by the multiple occurrences of the variables $Y$ and $Z$. In $G_2$ the \textit{or} and the \textit{and} operation from the second atom exhibit this very same behavior (represented by the variables $C$ and $D$), whereas the third atom represent an \textit{and} operation on different values. 
	Computing a $\preceq$-msg (and in this example, a $\sqsubseteq$-msg) for $G_1$ and $G_2$ can lead to two different generalizations, namely
	\[
	\begin{array}{lll}
	G & = & \{=\!\!(\Phi(X,B), or(\Phi(Y,C), \Phi(Z,D))), =\!\!(\Phi(V, E), and(\Phi(Y, F), \Phi(Z,G)))\}\\ %\\ = \{=\!\!(V_1, or(V_2, V_3)), =\!\!(V_5, and(V_6, V_7))\}
	G' & = & \{=\!\!(\Phi(X,B), or(\Phi(Y,C), \Phi(Z,D))), =\!\!(\Phi(V, A), and(\Phi(Y, C), \Phi(Z,D)))\}%= \{=\!\!(V_1, or(V_2, V_3)), =\!\!(V_4, and(V_2, V_3))\}\\
	\end{array}
	\] 
	Clearly, both generalizations are correct msg's, but the fact that all the variables in $G$ only occur once merely denotes that there exist six variables that together can make $G$ true. The repetition of $Y$ and $Z$ in $G_1$ as well as their connection with $C$ and $D$ is a lost information, abstracted by the anti-unification process. On the other hand, $G'$ by harboring less different variables introduces less variable abstraction, effectively depicting some dataflow logic that is common to $G_1$ and $G_2$, through the occurrence of $\Phi(Y,C)$ and $\Phi(Z,D)$ in both its atoms. On that level, $G'$ can be considered less general than $G$. 
\end{example}

Dataflow optimization thus formally boils down to finding, among a group of common generalizations for two goals $G_1$ and $G_2$, a goal $G$ such that $|\vars(G)|$ is minimal. In Example~\ref{ex:dataflow-bools}, we were interested in finding, among all possible msg's of $G_1$ and $G_2$, one that harbors a minimal number of variables; it makes sense, since abstracting one Boolean value with two different variables can be too liberal, depending on the applications. In that case of dataflow optimization, where the target goal must be a msg (i.e. when both structure and dataflow must be optimized), the dataflow problem is NP-complete. The same is true for lcg's. In order to show this formally, we consider a formulation in terms of decision problems.

%is a straightforward concern given that capturing as much common dataflow structure amongst $G_1$ and $G_2$ can further improve a msg's specificity and thus quality. 
\begin{theorem}\label{thm-dataflow-np-complete}
	Let MSG-MIN (resp. LCG-MIN) denote the following decision problem: "Given goals $G_1$, $G_2$ and a constant $p\in \mathbb{N}_0$, does there exist a $\leqslant$-msg (resp. $\leqslant$-lcg) of $G_1$ and $G_2$ that has less than $p$ different variables?". MSG-MIN and LCG-MIN are NP-complete.
\end{theorem}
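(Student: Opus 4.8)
The plan is to establish the two standard components of NP-completeness separately: membership in NP, which is routine, and NP-hardness, which carries the real content. Since the statement bundles two relations ($\sqsubseteq$ and $\preceq$) and two problems (MSG-MIN and LCG-MIN), I would aim for a single gadget that settles all of them at once.

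For membership, I would take as a certificate for a ``yes'' instance a candidate generalization $G$ together with two witnessing substitutions $\theta_1,\theta_2$. Since any $\leqslant$-common generalization of $G_1,G_2$ has at most $\min(|G_1|,|G_2|)$ atoms, each of bounded size, $G$ and the $\theta_i$ are of polynomial size. Verification then amounts to: (i) checking $G\theta_1\subseteq G_1$ and $G\theta_2\subseteq G_2$ (and that each $\theta_i$ is a renaming in the $\preceq$ case), which is a direct substitution-and-subset test; (ii) checking $|\vars(G)|<p$; and (iii) checking that $G$ is genuinely an msg (resp. lcg), i.e. that $|\terms(G)|$ (resp. $|G|$) equals the optimal value, which by the polynomial algorithms of Section~\ref{section-relation-1} (Theorems~\ref{thm-ausqsubseteq} and~\ref{thm-sqsubseteq-msg}, together with Proposition~\ref{prop-msg-lcg}) can be computed in polynomial time. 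All three steps are polynomial, so both MSG-MIN and LCG-MIN lie in NP.

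For hardness, I would give a polynomial reduction from a known NP-complete problem whose objective is to maximize shared structure -- the natural candidates being CLIQUE / INDEPENDENT SET or a packing/covering problem. The guiding observation is that, once a matching between the atoms of $G_1$ and $G_2$ is fixed, anti-unifying a matched pair $p(t_1,\dots,t_n)$ and $p(s_1,\dots,s_n)$ aligns position $i$ with position $i$ and contributes the variable $\Phi(t_i,s_i)$; two aligned positions share a variable exactly when their term-pairs coincide. Hence $|\vars(G)|$ equals the number of \emph{distinct} pairs $(t,s)$ occurring across the chosen matching, and minimizing it means choosing a matching that maximizes the coincidence of these pairs. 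I would engineer $G_1$ and $G_2$ so that the admissible (maximal) matchings are forced -- using fresh predicate symbols so that each atom of $G_1$ can only be anti-unified with its intended counterparts, thereby pinning $|\terms(\cdot)|$ and $|\cdot|$ at their optima -- while the residual freedom in aligning repeated terms encodes the choices of the source instance. The minimum achievable $|\vars(G)|$ then equals a fixed offset minus the source optimum, so setting the threshold $p$ accordingly makes the decision problems equivalent. As in Example~\ref{ex:dataflow-bools}, I would arrange that every matched pair anti-unifies to a full-arity atom, so that in the gadget the lcg's and msg's coincide and one construction handles both MSG-MIN and LCG-MIN, for $\sqsubseteq$ and $\preceq$ alike (for $\preceq$ the two relations already collapse by Proposition~\ref{prop-msg-lcg}).

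The hard part will be the design and verification of this gadget. Concretely, the obstacle is to simultaneously (a) pin the structural optimum so that the maximality requirement of an msg/lcg neither forbids any solution-encoding matching nor admits spurious ones, and (b) establish a tight two-way correspondence between the distinct-variable count and the source optimum, so that a low-variable generalization can be \emph{decoded} into a source solution and, conversely, every source solution yields a generalization meeting the bound $p$. Securing both properties at once -- while keeping the construction polynomial in size and correct for both generalization relations -- is where the technical effort concentrates; the NP-membership argument and the final check that the reduction runs in polynomial time are then comparatively mechanical.
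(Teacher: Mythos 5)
Your NP-membership argument is sound and essentially the paper's: compute an optimal msg/lcg with the polynomial algorithms of Section~\ref{section-relation-1}, compare against the candidate's $|\terms(\cdot)|$ (resp.\ $|\cdot|$), and check the variable bound. The gap is in the hardness half, which is where all the content of this theorem lives: you never construct the gadget. You list the two properties it must satisfy -- pinning the structural optimum while leaving exactly the solution-encoding freedom, and a tight two-way correspondence between $|\vars(G)|$ and the source optimum -- and you correctly observe that minimizing $|\vars(G)|$ means maximizing coincidence of term-pairs across the matching, but the reduction itself is deferred (``I would engineer $G_1$ and $G_2$ so that\dots''). A proof of NP-hardness is precisely such a construction plus its verification; without it the claim is unproven, and it is not evident that a CLIQUE-style gadget (your first-named candidate) would satisfy your conditions (a) and (b) at all.

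For comparison, the paper reduces from SET COVER, and the gadget is much simpler than your desiderata suggest: for a universe $U$ and collection $S=\{S_1,\dots,S_n\}$, take $G_1=\{x_j(V)\:|\:x_j\in U\}$ (one atom per element, all sharing the single variable $V$, with the element names $x_j$ serving as predicate symbols) and $G_2=\{x_j(W_i)\:|\:x_j\in S_i\}$ (one variable $W_i$ per set $S_i$). Every msg is a copy of $G_1$ with each occurrence of $V$ replaced by some $\Phi(V,W_k)$, and choosing the variable $\Phi(V,W_k)$ for atom $x_j(V)$ is exactly choosing the set $S_k$ to cover $x_j$; hence the msg has fewer than $p$ distinct variables iff $U$ has a cover by fewer than $p$ sets. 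Because the construction contains no non-variable terms, msg's and lcg's coincide and $\sqsubseteq$ and $\preceq$ behave identically, so the single reduction settles MSG-MIN and LCG-MIN for both relations -- the same collapsing trick you anticipated, but realized concretely. Note also that here the matching between atoms is not ``forced'' to be unique as you proposed; it is the freedom in choosing \emph{which} $x_j(W_i)$ generalizes $x_j(V)$ that encodes the cover, which is what makes the gadget so short.
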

Now instead of looking to \textit{minimize} the number of different variables in the computed generalization $G$, one could be interested in \textit{forcing} to preserve all the dataflow implied in the generalized goals, not allowing to abstract away the links that appear in the goals' terms. Intuitively, this can be done by forbidding any term from one of the input goals to have more than one "corresponding term" in the other input goal. In other words, the dataflow is considered entirely preserved if the underlying variabilization function $\Phi$ doesn't associate any term with two or more different terms at the same time. Formally, this amounts to using an \textit{injective version} of our generalization relations. We say that a generalization relation is injective if its definition only holds for injective substitutions. For a common generalization $G$ of goals $G_1$ and $G_2$ and for some function $\Phi$ associating fresh variable names to couples of variables, this implies when using an anti-unification algorithm (e.g. Algorithm~\ref{algo-rel-1-lcg}) that for any two different variables $\Phi(T_1, T_2)$ and $\Phi(T_3, T_4)$ appearing in $G$, it holds that $T_1 \neq T_3 \neq T_2 \neq T_4\neq T_1$. We will denote by $\sqsubseteq^\iota$ (resp. $\preceq^\iota$) the versions of $\sqsubseteq$ (resp. $\preceq$) that exhibit this property.

\begin{example}
    Consider the injective relation $\preceq^\iota$ as well as the goals 
    	$G_1 = \{and(A,B), or(B,C), xor(C,A)\}$ and
    	$G_2 = \{and(X,Z), or(Y,X), xor(Z,Y)\}$.
    The only common generalizations are $\emptyset$, $\{and(\Phi(A,X),\Phi(B,Z))\}, \{or(\Phi(B,Y), \Phi(C,X))\}$ and $\{xor(\Phi(C,Z), \Phi(A,Y))\}$. No common generalization of size larger than 1 exists, since (at least) one of the matching substitutions is not injective. For example, the goal $G = \{and(\Phi(A,X), \Phi(B,Z)), or(\Phi(B,Y), \Phi(C,X))\}$ is not a common generalization of $G_1$ and $G_2$, since (at least) one of the substitutions mapping this goal to $G_1$ or $G_2$ is not injective. Indeed, the substitution $[\Phi(A,X) \mapsto A, \Phi(B,Z) \mapsto B, \Phi(B,Y) \mapsto B, \Phi(C,X) \mapsto C]$ maps both $\Phi(B,Z)$ and $\Phi(B,Y)$ to $B$; this is sufficient to reach the conclusion that $G$ is not an injective generalization of $G_1$ and $G_2$. Note that in this case, the other potential substitution, i.e. the one mapping $G$ on $G_2$, is not injective either. 
\end{example}

The two following observations immediately result from the injective relations being more constrained versions of their non-injective counterparts. 

\begin{proposition}
	Relations $\sqsubseteq^\iota$ and $\preceq^\iota$ are quasi-orders. 
\end{proposition}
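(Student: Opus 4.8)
The plan is to show that each of $\sqsubseteq^\iota$ and $\preceq^\iota$ is both reflexive and transitive, exploiting that they are exactly the restrictions of $\sqsubseteq$ and $\preceq$ (already quasi-orders by Proposition~\ref{prop-quasi-order}) to \emph{injective} witnesses. Reflexivity is immediate: for any goal $G$ the identity substitution $\mathit{id}$ is trivially injective and satisfies $G\,\mathit{id} = G \subseteq G$, so $G \sqsubseteq^\iota G$; since $\mathit{id}$ is also a renaming, the same instance yields $G \preceq^\iota G$.

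For transitivity, suppose $G_1 \leqslant^\iota_{\theta_1} G_2$ and $G_2 \leqslant^\iota_{\theta_2} G_3$ with $\theta_1,\theta_2$ injective. The natural candidate witness is the composition $\theta = \theta_1\theta_2$. That it delivers the required inclusion is routine and mirrors the proof of Proposition~\ref{prop-quasi-order}: applying a substitution preserves set inclusion of atoms, so $G_1\theta_1\theta_2 = (G_1\theta_1)\theta_2 \subseteq G_2\theta_2 \subseteq G_3$. The only genuinely new content is therefore to verify that $\theta_1\theta_2$ is again injective on the relevant domain $\vars(G_1)$, so that it witnesses $G_1 \leqslant^\iota G_3$.

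I expect this injectivity-preservation step to be the main obstacle, and it is where the two relations must be handled separately. For $\preceq^\iota$ it is clean: a renaming acts on terms purely by relabelling variables, hence an injective renaming induces an injective map on \emph{all} terms (it is invertible on its image); since the variables introduced by $\theta_1$ lie in $\vars(G_2)$, the domain on which $\theta_2$ is injective, composing the two injective renamings keeps distinct variables of $G_1$ distinct, so $\theta_1\theta_2$ is itself an injective renaming. For $\sqsubseteq^\iota$ the argument is more delicate, because a general injective substitution need not act injectively on compound terms; one must track how $\theta_2$ acts on the image terms $\theta_1(\vars(G_1))$ and argue that no two distinct variables of $G_1$ are collapsed, possibly by refining the choice of witness rather than taking the naive composition. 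This case is the crux of the proof, and it is precisely here that the informal remark that "the injective relations are more constrained versions of their counterparts" has to be turned into a careful verification rather than merely asserted.
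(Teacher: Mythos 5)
Your proof is incomplete exactly where you admit it is: you never establish transitivity of $\sqsubseteq^\iota$, deferring it as ``the crux'' that ``has to be turned into a careful verification.'' A proof that ends by naming its hardest step, rather than carrying it out, is not a proof. For comparison, the paper itself offers nothing more: it asserts that both propositions ``immediately result from the injective relations being more constrained versions of their non-injective counterparts,'' and provides no appendix proof. That justification is itself shaky --- restricting the admissible witnesses of a quasi-order preserves reflexivity (the identity witness is still injective, as you correctly note) but does \emph{not} in general preserve transitivity --- so your instinct that a real verification is needed here is sound; you simply did not perform it.

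The step you left open is not a routine technicality: whether it can be completed at all depends on what ``injective substitution'' means, and under the reading suggested by the paper's own text (distinct variables of the domain are mapped to distinct terms, cf.\ the condition on $\Phi(T_1,T_2)\neq\Phi(T_3,T_4)$), the claim you are trying to prove is \emph{false}. Take $G_1 = \{p(A), q(B)\}$, $G_2 = \{p(f(Y)), q(X)\}$, $G_3 = \{p(f(a)), q(f(a))\}$ with $a$ a constant, and let $\theta_1 = [A \mapsto f(Y),\, B \mapsto X]$, $\theta_2 = [Y \mapsto a,\, X \mapsto f(a)]$. Both substitutions are injective on their domains (indeed even as total maps on $\mathcal{V}$), $G_1\theta_1 = G_2$ and $G_2\theta_2 = G_3$, yet any $\theta$ with $G_1\theta \subseteq G_3$ must send both $A$ and $B$ to $f(a)$, so no injective witness exists and $G_1 \not\sqsubseteq^\iota G_3$. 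Transitivity can only be rescued by reading injectivity at the level of the induced map on \emph{terms} (which disqualifies $\theta_2$, since $f(Y)\theta_2 = X\theta_2 = f(a)$); under that reading closure under composition is immediate, which is presumably what the paper has in mind. The same subtlety already infects your $\preceq^\iota$ argument: $\vars(G_2)$ need not be contained in $\dom(\theta_2)$, and the renaming $[X \mapsto Y]$ is injective on its domain yet collapses the terms $X$ and $Y$, so your claim that an injective renaming is ``invertible on its image'' requires injectivity of the total map on $\mathcal{V}$, not just on the domain. In short, you located the crux correctly, but filling the gap requires pinning down the definition of injectivity and verifying closure under composition for that definition; without this, the proposition remains unproven.
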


\begin{proposition}
	Let $G_1$ and $G_2$ be goals. If $G_1\sqsubseteq_\theta^\iota G_2$, then $G_1\sqsubseteq_\theta G_2$. If $G_1\preceq^\iota_\theta G_2$, then $G_1\preceq_\theta G_2$ and $G_1\sqsubseteq^\iota_\theta G_2$.
\end{proposition}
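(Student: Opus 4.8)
The plan is to unfold the definitions of the four relations and observe that each of the three implications reduces to a containment between the classes of substitutions that are allowed to witness them. Recall that $G_1\sqsubseteq_\theta G_2$ means $G_1\theta\subseteq G_2$ for an arbitrary substitution $\theta$; that $G_1\preceq_\theta G_2$ is the same statement with the extra requirement that $\theta$ be a renaming (i.e. $\img(\theta)\subseteq\mathcal{V}$); and that the superscript-$\iota$ versions add the further requirement that $\theta$ be injective. In each implication we are handed one fixed substitution $\theta$ witnessing the hypothesis, and the whole task is to check that this very same $\theta$ still witnesses the conclusion once some of the constraints are dropped. I would therefore keep $\theta$ fixed throughout and argue purely by relaxing requirements, so that the emphasized-$\theta$ form of each relation (and not merely its existential closure) is preserved.

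For the first implication I would simply note that an injective substitution is in particular a substitution, so from $G_1\theta\subseteq G_2$ with $\theta$ injective we immediately get $G_1\sqsubseteq_\theta G_2$. For the second, a renaming is by definition a substitution whose image lies in $\mathcal{V}$; hence if an injective renaming $\theta$ witnesses $G_1\preceq^\iota_\theta G_2$, then forgetting injectivity leaves a renaming witnessing $G_1\preceq_\theta G_2$. For the third, since renamings are substitutions and injectivity is a property of the underlying map that does not depend on whether we regard $\theta$ as a renaming or as a general substitution, the same injective $\theta$ witnessing $G_1\preceq^\iota_\theta G_2$ also witnesses $G_1\sqsubseteq^\iota_\theta G_2$.

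There is no genuine obstacle here: the statement merely records that the injective relations sit inside their non-injective counterparts and that $\preceq^\iota$ refines both $\preceq$ and $\sqsubseteq^\iota$, which follows at once from the class inclusions \{injective renamings\} $\subseteq$ \{renamings\} $\subseteq$ \{substitutions\} and \{injective renamings\} $\subseteq$ \{injective substitutions\} $\subseteq$ \{substitutions\}. The only point I would be careful about is to present each inclusion as an inclusion of maps rather than of relations, so that the fixed witness $\theta$ is carried across unchanged; this parallels, in the injective setting, the already-noted non-injective fact that $G\preceq_\theta G'\Rightarrow G\sqsubseteq_\theta G'$.
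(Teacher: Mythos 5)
Your proof is correct and matches the paper's treatment: the paper states this proposition without a separate proof, remarking only that it ``immediately results from the injective relations being more constrained versions of their non-injective counterparts,'' which is exactly the class-inclusion argument you spell out. Your care in keeping the same fixed witness $\theta$ across each implication is a sound (and slightly more explicit) rendering of that same observation.
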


With an injective generalization relation, the computing of a msg is fundamentally dissociated from that of an lcg, as an msg is not necessarily a lcg due to the injectivity constraint. However, both situations are intractable. In order to show this formally, we define the following decision problem variant.

\begin{theorem}\label{thm-inj-np-complete}
	Let INJ denote the following decision problem: "Given an injective generalization relation $\leqslant^\iota$ along with goals $G_1$ and $G_2$ such that $|G_1|\le|G_2|$, verify whether there exists an ad hoc injective substitution $\theta$ such that $G_1\theta\subseteq G_2$". INJ is NP-complete.
\end{theorem}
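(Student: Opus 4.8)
The plan is to prove the two halves of NP-completeness separately. For membership in NP, I would use the substitution $\theta$ itself as a polynomial-size certificate. The essential remark is that $\theta$ may be assumed small: since $G_1\theta\subseteq G_2$, every atom $A\theta$ with $A\in G_1$ must coincide with some atom of $G_2$, and therefore each variable $V\in\vars(G_1)$ is forced to be mapped by $\theta$ onto the subterm of $G_2$ occurring at the matching argument position. Thus $\dom(\theta)\subseteq\vars(G_1)$ and every element of $\img(\theta)$ is a subterm occurring in $G_2$, so $\theta$ has size polynomial in $|G_1|+|G_2|$. A verifier then checks in polynomial time that (i) $\theta$ is injective (its images are pairwise distinct), (ii) when the relation at hand is $\preceq^\iota$, that $\img(\theta)\subseteq\mathcal{V}$, and (iii) that $G_1\theta\subseteq G_2$, by applying $\theta$ and testing membership in $G_2$ of each of the at most $|G_1|$ resulting atoms. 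This puts INJ in NP.

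For NP-hardness, I would reduce from CLIQUE (a restriction of subgraph isomorphism chosen to sidestep technicalities). Given an instance $(G,k)$, I encode graphs as goals over a single binary predicate $e$, turning vertices into variables and edges into atoms, using a symmetric encoding that inserts both orientations $e(U,W)$ and $e(W,U)$ so that undirected adjacency is represented independently of argument order. Concretely I set $G_1=\{e(X_i,X_j)\mid i\neq j,\ i,j\in 1..k\}$, the symmetric encoding of the complete graph $K_k$ on fresh variables $X_1,\dots,X_k$, and $G_2=\{e(Y_a,Y_b)\mid \{a,b\}\in E(G)\}$ (again both orientations) on fresh variables $Y_a$ indexed by the vertices of $G$; I take $\leqslant^\iota$ to be $\preceq^\iota$, so that $\theta$ ranges over injective renamings.

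The correctness rests on the equivalence between injective generalizations and clique embeddings. An injective renaming $\theta$ with $G_1\theta\subseteq G_2$ induces an injective map $f\colon 1..k\to V(G)$ via $f(i)=a$ whenever $\theta(X_i)=Y_a$; injectivity of $\theta$ yields injectivity of $f$, and $G_1\theta\subseteq G_2$ forces $\{f(i),f(j)\}\in E(G)$ for all $i\neq j$, so the image of $f$ is a $k$-clique. Conversely, any $k$-clique $\{a_1,\dots,a_k\}$ yields the injective renaming $\theta(X_i)=Y_{a_i}$ satisfying $G_1\theta\subseteq G_2$. The reduction is clearly polynomial. To honour the side condition $|G_1|\le|G_2|$, I note that $|G_1|=k(k-1)$ and $|G_2|=2|E(G)|$, so the condition holds exactly when $\binom{k}{2}\le|E(G)|$; whenever $|E(G)|<\binom{k}{2}$ the graph cannot contain a $k$-clique at all, so I simply output a fixed unsatisfiable instance such as $G_1=\{p(X)\}$, $G_2=\{q(Y)\}$ with $p\neq q$, which respects $|G_1|\le|G_2|$.

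The main obstacle is not the equivalence, which is essentially bookkeeping, but making the encoding faithful. One must guarantee that the argument order of atoms does not spuriously constrain the matching (handled by the symmetric both-orientations encoding, or equivalently by reducing from directed subgraph isomorphism), that no isolated vertices are introduced whose variables would fall outside the domain of the renaming (handled by reducing from CLIQUE, since $K_k$ has no isolated vertex for $k\ge 2$), and that the inequality $|G_1|\le|G_2|$ is respected in every case.
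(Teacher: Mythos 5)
Your proposal is correct, and for NP-membership it essentially matches the paper: both use the substitution $\theta$ as the certificate, though you are more careful in arguing that $\theta$ may be assumed to have polynomial size (restricted to $\vars(G_1)$, with images among subterms of $G_2$), whereas the paper simply asserts that the inclusion $G_1\theta\subseteq G_2$ is checkable in polynomial time given $\theta$. For NP-hardness, however, your route genuinely differs: the paper gives no reduction at all, but instead invokes the earlier result of~\cite{gen} that deciding ``is $G_1$ a $\preceq^\iota$-lcg of $G_1$ and $G_2$?'' is NP-complete via a reduction from the Induced Subgraph Isomorphism problem~\cite{SYSLO198291}, remarking that the same reduction covers the other cases. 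You construct instead an explicit, self-contained reduction from \textsc{Clique}: vertices become variables, edges become symmetrically encoded binary atoms, and an injective renaming embedding the encoding of $K_k$ into the encoding of $G$ is precisely a $k$-clique. This is the same idea in spirit (\textsc{Clique} is the special case of induced subgraph isomorphism with a complete pattern, for which induced and non-induced embeddings coincide), but your version is verifiable without consulting~\cite{gen}, and you additionally discharge the side condition $|G_1|\le|G_2|$, which the paper's proof never addresses. Two remarks: fixing the relation to $\preceq^\iota$ suffices because the relation is part of the input to INJ, so hardness of a restriction implies hardness of the general problem; moreover your encoding in fact works verbatim for $\sqsubseteq^\iota$, since every argument position in $G_2$ holds a variable, forcing any injective substitution achieving $G_1\theta\subseteq G_2$ to act as a renaming on $\vars(G_1)$ --- stating this would recover the paper's ``same reduction for the other cases'' claim. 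The only corner case left implicit is $k\le 1$, where your $G_1$ is empty and the reduction should answer the \textsc{Clique} instance directly; this is a triviality, not a gap.
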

%\begin{proof}
%	INJ is in NP: given a relation $\leqslant^\iota$, goals $G_1$ and $G_2$ and a substitution (or renaming) $\theta$, it is possible to verify in polynomial time whether the application of $\theta$ on $G_1$ results on a subset of $G_2$ or not.
%	As for the proof of NP-hardness, we refer to~\cite{gen} in which the problem ``is $G_1$ a $\preceq^\iota$-lcg of $G_1$ and $G_2$?'' has been proved to be NP-complete using a polynomial reduction from the Induced Subgraph Isomorphism Problem~\cite{SYSLO198291}. The same reduction can be used for the other cases, leading to the conclusion that INJ is NP-complete.
%\end{proof}

INJ is basically the verification of whether a goal $G_1$ can be adequately mapped onto (a subset of) another goal $G_2$. If there exists a substitution $\theta$ (resp. a renaming $\rho$) making this possible, then $G_1$ is a $\sqsubseteq^\iota$- (resp. $\preceq^\iota$-)largest \textit{and} most specific generalization of $G_1$ and $G_2$, since no larger nor structurally more specific goal than $G_1$ can exist for this specific situation.

Due to the inherent intractability of injective relations, it is sometimes preferable to make use of tractable abstractions rather than exact brute-force algorithms, especially if a quick and approximate (though entirely dataflow-preserving) anti-unification result suffices for the application at hand. In the next section, we give such an efficient -- yet highly accurate~-- abstraction for the computation of $\preceq^\iota$-lcg's.

	%%%%%%%%%%%%%%%%%%%%%%%%%%%%%%%%%%%%%%%%%%%%%%%%%%%
	%%%%%%%%%%%%%%%%%%%%%%%%%%%%%%%%%%%%%%%% 
	%%%%%%%%%%%%%%%%%%%%%%%%%%%%%%%%%%%%%%%%%%%%%%%%%%%
	\section{The $k$-swap Stability Abstraction}\label{section-relation-3}
In what follows, we introduce an abstraction for the largest common generalization with respect to $\preceq^\iota$ that can be computed in polynomial time. The abstraction was already introduced in~\cite{gen} but no formal proof of its complexity was given. The abstraction is based on the \textit{$k$-swap stability} property, which is in turn defined in terms of \textit{pairing generalizations}. 

\begin{definition}
	Let $G_1$ and $G_2$ be two renamed apart goals and $G$ be a $\preceq^\iota$-common generalization of $G_1$ and $G_2$ such that $G\subseteq G_1$. Let $\rho$ be any renaming such that $G\rho\subseteq G_2$. The \emph{pairing generalization} of $G$, denoted $\pi(G)$, is the set of pairs $(A_1, A_2) \in G_1\times G_2$ such that $\forall(A_1, A_2)\in\pi(G):A_1\rho = A_2$. 
\end{definition}

\begin{example}
	Considering the goals $G_1 = \{p(A), p(B), q(A)\}$ and $G_2 = \{p(X), q(Y)\}$, it is easy to see that $G = \{p(\Phi(B,X)), q(\Phi(A,Y))\}$ is a $\preceq^\iota$-common generalization of them. The corresponding pairing generalization is $\pi(G) = \{(p(B), p(X)), (q(A), q(Y))\}$.
\end{example} 

The notion of a pairing generalization renders thus explicit the corresponding atoms from the generalized goals that contribute to the generalization.
As a slight abuse of language, given a pairing generalization $\pi$ of some generalization $G$ for goals $G_1$ and $G_2$, we will simply say that $\pi$ is a \textit{pairing for $G_1$ and $G_2$}. 
Pairings can be used to express a notion of goal \textit{stability} in the following sense.

\begin{definition}\label{def-k-swap-stable}
	Let $G_1$ and $G_2$ be two renamed apart goals and $G$ be a $\preceq^\iota$-common generalization of $G_1$ and $G_2$ such that $G\subseteq G_1$. $G$ is \emph{k-swap stable} if and only if there does not exist some generalizations $\hat{G}$ and $G'$ of $G_1$ and $G_2$ such that $\hat{G}\supset G'$ and $|\pi(G)\cap\pi(G')|\ge |\pi(G)|-k$ for some $k\in\mathtt{N}$.
\end{definition}

Intuitively, a generalization $G$ is $k$-swap stable if it is impossible to transform $G$ into a larger generalization $\hat{G}$ in spite of ``swapping'' at most $k$ pairs in $\pi(G)$.
%Intuitively, a generalization $G$ is $k$-swap stable if ``swapping'' at most $k$ pairs in $\pi(G)$ -- i.e. removing pairs or replacing them with other pairs in $G_1\times G_2$ -- would not make $G$ easily extensible into some greater generalization $\hat{G}$. 
%Put differently, $G$ is \textit{not} $k$-swap stable if $k$ editions, or swaps, in $\pi(G)$ suffice to consequently extend $G$ into some other common generalization $\hat{G}$. 
This stability notion gives a characterization of the quality of a computed generalization. If a generalization is 0-swap stable (the weakest characterization), it cannot be extended by adding another atom but this guarantees in no way that a larger generalization could not be found. If a generalization $G$ is $k$-swap stable (for $k>0$), it means that even if we exchange up to $k$ pairs in $\pi(G)$ by others, the generalization cannot be extended into a larger one. Consequently, if a generalization is $k$-swap stable for $k$ the number of atoms in the smallest of the two goals (denoted by $\infty$-swap stable), it means that the computed generalization is a largest common generalization. 
Operationally, when naively searching for a lcg by backtracking, the fact that a computed generalization is $k$-swap stable means that one should backtrack by \textit{more} than $k$ choice points in order have a chance of finding a larger generalization. 

%Operationally, each swap can be seen as a possible backtracking in a naive algorithm trying to compute a $\preceq^\iota$-lcg. When computing a 0-swap stable generalization
%Indeed, when $k$ grows, the outputted generalization gets higher accuracy as it reconsiders more choices that were made during the construction of $G$. If $k = \infty$, building a generalization through $k$-swaps will perform an exhaustive search, as all choices that are made will be reconsidered. 

\begin{example}\label{ex:k-swap-stable}
   	Consider the goals $G_1 = \{add(X,Y,Z), even(X), odd(Z), p(Z)\}$ and $
G_2 = \{add(A,B,C), add(C,B,A), even(C), odd(A), p(C)\}$. 
   	 $\pi_1= \{(add(X,Y,Z), add(A,B,C))\}$ is not $0$-swap stable. Indeed, we can enlarge $\pi_1$ by adding $(p(Z),p(C))$, in order to obtain 
   	$\pi_2 = \{(add(X,Y,Z),add(A,B,C)), (p(Z),p(C))\}.$
   	Note that $\pi_2$ is $0$-swap stable, it is impossible to add another pair to $\pi_2$ and still obtain a common generalization. 
   	It is also $1$-swap stable, seeing that replacing (or removing) one of the pairs doesn't lead to a pairing readily extensible to a pairing of size strictly greater than~$2$. However, $\pi_2$ is not $2$-swap stable. Indeed, replacing the pair $(add(X,Y,Z),add(A,B,C))$ by the pair $(add(X,Y,Z), add(C,B,A))$ in $\pi_2$ and removing the now incompatible pair $(prime(Z),prime(C))$ (i.e. choosing the renaming $[X\mapsto C, Y\mapsto B, Z\mapsto A]$ instead of $[X\mapsto A, Y\mapsto B, Z\mapsto C]$) gives rise to $\pi_2' = \{(add(X,Y,Z),add(C,B,A))$, which can readily be extended into
   	   		$\pi_3 = \{(add(X,Y,Z),add(C,B,A)), (even(X),even(C)), (odd(Z),odd(A))\}$
   	which is a pairing of size 3. The latter being $\infty$-swap stable, it represents a $\preceq^\iota$-lcg, namely
   		$\hat{G} = \{add(\Phi(X,C), \Phi(Y,B), \Phi(Z,A)), even(\Phi(X,C)), odd(\Phi(Z,A))\}$
\end{example}
An algorithm has been introduced in~\cite{gen} that builds up a $k$-swap stable generalization using the process suggested in Example~\ref{ex:k-swap-stable}. Its practical performance has been assessed on different test cases. The tests indicate that the $k$-swap stability property represents a well-suited approximation of the concept of $\preceq^\iota$-lcg. Indeed, in all test cases the size of the $k$-swap stable generalization was at least $90\%$ of the size of an lcg for the same anti-unification problem, while the computational time was radically reduced -- especially as the size of the input goals grows\footnote{For example, with $k$ fixed to 4, anti-unifying goals harboring 15 to 22 atoms, each of arity between 1 and 3, comes on average down from more than 7 minutes (using bruteforce) to 272 milliseconds (using the algorithms presented in this section), while the size of the computed generalization is on average $95\%$ of the size of a lcg. More detailed test results are exposed in~\cite{gen}.}. However, in~\cite{gen} only pragmatical aspects have been explored; the theoretical foundations of the $k$-swap technique were not detailed, and  no actual time complexity upper bound has been demonstrated. We fill this gap in the remainder of this section. First, we introduce the algorithm, then we formally prove that its time complexity is polynomially bounded. Before introducing the algorithm, which is essentially composed of two sub-algorithms, we give some notations that will facilitate their formulation. First, we define an operator that allows to combine two pairings into a single pairing.
	
\begin{definition}
	Let $G_1$ and $G_2$ be two renamed apart goals. The \emph{enforcement operator} is defined as the function $\enforce: (G_1\times G_2)^2 \mapsto (G_1\times G_2)$ such that for two pairing generalizations $\pi$ and $\pi'$ for $G_1$ and $G_2$, $\pi \enforce \pi' = \pi' \cup M$ where $M$ is the largest subset of $\pi$ such that $\pi'\cup M$ represents a $\preceq^\iota$-common generalization of $G_1$ and $G_2$. 
\end{definition}
	
In other words, $\pi\enforce\pi'$ is the mapping obtained from $\pi\cup\pi'$ by eliminating those pairs of atoms $(A,A')$ from $\pi$ that are \textit{incompatible} with some $(B,B')\in\pi'$ either because they concern the same atom(s) or because the involved renamings cannot be combined into a single injective renaming. 
	
\begin{example}
	Consider 
	$\pi = \{(p(X, Y), p(A, B)), (q(X), q(A))\}$ as a pairing for two goals $G_1$ and $G_2$. Suppose $\pi' = \{(r(Y), r(C))\}$ is also a pairing for $G_1$ and $G_2$. Enforcing $\pi'$ into $\pi$ gives $\pi\enforce\pi' = \{(q(X), q(A)), (r(Y), r(C))\}$. Indeed, this can be seen as forcing $Y$ to be mapped on $C$; therefore the resulting pairing generalization can no longer contain $(p(X, Y), p(A, B))$ as the latter maps $Y$ on $B$. 
\end{example}

For $\pi_1$ and $\pi_2$ pairings we will also denote by $\compatible_{\pi_1}(\pi_2)$ the subset of $\pi_2$ of which each element can be added to $\pi_1$ such that the result is a pairing (i.e. there is no injectivity conflict in the associated renaming). Finally, we use $\gen(G_1, G_2)$ to represent those atoms from $G_1$ and $G_2$ that are variants of each other, formally $\gen(G_1,G_2)=\{(A,A')\:|\:A\in G_1, A'\in G_2\mbox{ and }A\rho=A'\mbox{ for some renaming }\rho\}$. 
The first algorithm is depicted in Algorithm~\ref{alg:kswap}. The algorithm represents the construction of a $k$-swap stable generalization of goals $G_1$ and $G_2$. At each round, the process tries to transform the current generalization $\pi$ (which initially is empty) into a larger generalization by forcing a new pair of atoms $(A,A')$ from $\gen(G_1,G_2)$ in $\pi$, which is only accepted if doing so requires to swap no more than $k$ elements in $\pi$. More precisely, the algorithm selects a subset of $\pi$ (namely $\pi_s$) that can be swapped with a subset $\pi_c$ of the remaining mappings from $\gen(G_1,G_2)\setminus\pi$ such that the result of replacing $\pi_s$ by $\pi_c$ in $\pi$ and adding $(A,A')$ constitutes a pairing. Note how condition 1 in the algorithm expresses that $\pi_s$ must include at least those elements from $\pi$ that are not compatible with $(A,A')$.
The search continues until no such $(A,A')$ can be added.

\begin{algorithm}[hbtp]
	\caption{Computing a $k$-swap stable generalization $G$ for goals $G_1$ and $G_2$}
	\label{alg:kswap}
	\begin{algorithmic}
		\State $\pi\gets\emptyset$
		\Repeat
			\State $found\gets false$
			\ForAll{$(A,A')$ in $\gen(G_1, G_2)\setminus \pi$}
				\State select $\pi_s \subseteq \pi$ and $\pi_c\subseteq \gen(G_1, G_2)\setminus(\pi \cup\{(A,A')\})$ such that:
				\State \hspace*{3ex}(1) $\pi_s\supseteq \pi\setminus \pi\enforce\{(A,A')\}$
				\State \hspace*{3ex}(2) $|\pi_s| \le k$
				\State \hspace*{3ex}(3) $|\pi_c| = |\pi_s|$
				\State \hspace*{3ex}(4) $\pi\setminus \pi_s\cup \pi_c\cup\{A,A'\}$ is a pairing generalization of $G_1$ and $G_2$
				
				\If{such $\pi_c$ and $\pi_s$ are found}
					\State $\pi\gets \pi\setminus \pi_s\cup \pi_c \cup \{(A,A')\}$
					\State $found\gets true$
					\State \textbf{break} out of the \textbf{for} loop
				\EndIf
			\EndFor
		\Until $\neg found$
		\State $G \gets \dom(\pi)$
	\end{algorithmic}
\end{algorithm}	

The main operation of Algorithm~\ref{alg:kswap}, namely the selection of $\pi_s$ and $\pi_c$, is detailed in Algorithm~\ref{alg:selection} which aims to select the parts of the pairings to be swapped in order to enlarge the resulting pairing under construction ($\pi$) by the couple $(A,A')$.
To that purpose $\pi_s$ is initialized with the part of $\pi$ that is incompatible with the pair of atoms $(A,A')$ that we wish to enforce into the generalization. Its replacement mapping $\pi_c$ is initially empty and the algorithm subsequently searches to construct a sufficiently large $\pi_c$ (the inner while loop). During this search, $S$ represents the set of candidates, i.e. couples from $\gen(G_1,G_2)$ that are not (yet) associated to the generalization. In order to explore different possibilities with backtracking, the while loop manipulates a stack $GS$ that records alternatives for $\pi_c$ with the corresponding set $S$ for further exploration. 
\begin{algorithm}[hbtp]
\caption{Selecting $\pi_s$ and $\pi_c$ for a given $(A,A')$}
\label{alg:selection}
	\begin{algorithmic}
		\State $GS \gets \{\}$,  $BS \gets \{\}, \pi_c \gets\{\}$
		\State $\pi_s \gets \pi\setminus\pi\enforce\{(A,A')\}$
		\State $S \gets \gen(G_1, G_2) \setminus\pi\enforce\{(A,A')\}$
		\While{$|\pi_c| < |\pi_s| \mbox{ and } |\pi_s|\le k$}
			\While{$|\pi_c| < |\pi_s| \mbox{ and } \neg (\compatible_{\pi\setminus\pi_s\cup\pi_c}(S) = \{\} \mbox{ and } GS = \{\})$}
				\ForAll{$p$ in $\compatible_{\pi\setminus\pi_s\cup\pi_c}(S)$}	
					\State $push(GS, (\pi_c\cup p, S\setminus\{p\}))$
				\EndFor
				\State $(\pi_c, S)\gets pop(GS)$
			\EndWhile
			\If{$|\pi_c|<|\pi_s|$}
				\ForAll{$p$ in $\pi\setminus\pi_s$}
					\State $enter(BS, \pi_s\cup\{p\})$
				\EndFor
				\If{$BS\neq\{\}$}
					\State $\pi_s\gets exit(BS)$
					\State $\pi_c\gets\{\}$
					\State $S\gets \gen(G_1, G_2)\setminus(\pi\cup \{(A,A')\})$
				\Else
					\State \textbf{return} $\bot$
				\EndIf
			\EndIf
		\EndWhile
		\If{$|\pi_c|=|\pi_s|$}
			\State \textbf{return} $\pi_s, \pi_c$
		\EndIf
			\State r\textbf{return} $\bot$
	\end{algorithmic}
\end{algorithm}

If the search for $\pi_c$ was without a satisfying result (i.e. no $\pi_c$ is found equal in size to $\pi_s$), the algorithm continues by removing another couple from $\pi$ (thereby effectively enlarging $\pi_s$). The rationale behind this action is that there might be a couple in $\pi$ that is ``blocking'' the couples in $S$ from addition to $\pi$. In order to achieve the removal of such potentially blocking couples, an arbitrary couple from $\pi\setminus\pi_s$ is selected, and alternatives are recorded in a queue ($BS$). Note the use of a queue (and its associated operations \textit{enter} and \textit{exit}) as opposed to the stack $GS$.
The process is repeated until either $|\pi_c|=|\pi_s|$ in what case we have found a suitable $k$-swap, or until $|\pi_s|>k$ in what case we have not, and the algorithm returns $\bot$. %Note how Algorithm~\ref{alg:selection} behaves analogously to an $A*$ search.

While the algorithms have been proven to correctly compute a $k$-swap stable generalization~\cite{gen}, no result on their complexity has yet been formally established. 

\begin{theorem}\label{thm-k-swap-stable}
	For a given and constant value of $k$, the combination of Algorithms~\ref{alg:kswap} and~\ref{alg:selection} computes a $k$-swap stable common generalization of input goals $G_1$ and $G_2$ in polynomial time $\mathcal{O}((\alpha M)^{k+1})$, with $0 \le M \le |gen(G_1, G_2)|$ and $0 \le \alpha \le \textit{min}(|G_1|,|G_2|)$.
\end{theorem}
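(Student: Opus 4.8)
The plan is to bound the total number of high-level operations performed by the two algorithms and to organise this count as a product of four factors whose product is exactly $(\alpha M)^{k+1}$. First I would fix the meaning of the two parameters: $\alpha$ is the largest size that a pairing $\pi$ attains during the run (so $\alpha\le\min(|G_1|,|G_2|)$, since any pairing is a partial matching of the atoms of $G_1$ with those of $G_2$ and hence has size at most $\min(|G_1|,|G_2|)$), and $M$ bounds the number of candidate variant pairs the search ranges over, so $M\le|\gen(G_1,G_2)|$. I would also note at the outset that $\gen(G_1,G_2)$ can be precomputed in time polynomial in the goal sizes, a cost dominated by the main bound, and that each compatibility/injectivity test and each stack/queue push or pop is treated as a unit-cost bookkeeping operation (any per-test factor polynomial in atom arity being independent of $k$ and absorbable into the constants).

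Next I would analyse the outer procedure, Algorithm~\ref{alg:kswap}. Because each successful round replaces $\pi_s$ by an equally-sized $\pi_c$ and then adjoins the single pair $(A,A')$ drawn from $\gen(G_1,G_2)\setminus\pi$, the size of $\pi$ grows by exactly one per success; since $|\pi|$ can never exceed $\alpha$, there are at most $\alpha$ successful rounds and hence $\mathcal{O}(\alpha)$ iterations of the \textbf{repeat} loop (the terminating failing round adding one more). Within a round the \textbf{for} loop ranges over $\gen(G_1,G_2)\setminus\pi$, contributing a factor $\mathcal{O}(M)$ of calls to the selection routine. This accounts for the leading $\alpha\cdot M$ factor.

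The heart of the argument --- and the step I expect to be the main obstacle --- is bounding a single call to Algorithm~\ref{alg:selection} by $\mathcal{O}((\alpha M)^{k})$, since two interleaved backtracking searches must be controlled simultaneously. The outer \textbf{while} grows the swap-out set $\pi_s$: the guard $|\pi_s|\le k$ caps its size at $k$, and the queue $BS$ performs a breadth-first enumeration of the subsets of $\pi$ eligible to become $\pi_s$, each obtained by single-element extensions; counting these as ordered extensions of length at most $k$ from the at most $\alpha$ elements of $\pi$ bounds the number of distinct $\pi_s$ examined by $\mathcal{O}(\alpha^{k})$. For each fixed $\pi_s$, the inner \textbf{while} uses the stack $GS$ to search depth-first for a compatible swap-in set $\pi_c$ of equal size drawn from $S\subseteq\gen(G_1,G_2)$; along every branch $S$ strictly shrinks while $|\pi_c|$ strictly grows and is capped by $|\pi_s|\le k$, so the search has depth at most $k$ and branching at most $M$, giving $\mathcal{O}(M^{k})$ explored states. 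The obstacle is to argue that these two loops compose multiplicatively: I would establish the loop invariants that every dequeue from $BS$ strictly increases $|\pi_s|$ (guaranteeing termination once $|\pi_s|>k$) and that the $\pi_c$-search is restarted afresh (with $\pi_c$ and $S$ reset) for the current $\pi_s$, so that no re-exploration occurs beyond the counted ordered selections and the total work of the routine is the product $\mathcal{O}(\alpha^{k})\cdot\mathcal{O}(M^{k})=\mathcal{O}((\alpha M)^{k})$.

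Finally I would assemble the factors. Multiplying the $\mathcal{O}(\alpha)$ repeat-iterations, the $\mathcal{O}(M)$ candidate pairs per iteration, and the $\mathcal{O}((\alpha M)^{k})$ cost of each selection call yields $\mathcal{O}\big(\alpha\cdot M\cdot(\alpha M)^{k}\big)=\mathcal{O}((\alpha M)^{k+1})$, as claimed; since $k$ is a fixed constant this is polynomial in the input size. As a last sanity check I would confirm that the resets of $\pi_c$ and $S$ on each enlargement of $\pi_s$, together with the termination conditions of both inner loops, do not introduce any exploration uncounted by the ordered-selection bounds above.
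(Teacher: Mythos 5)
Your proposal is correct and follows essentially the same route as the paper's own proof: you bound Algorithm~\ref{alg:selection} by (number of candidate $\pi_s$ sets of size at most $k$) $\times$ (number of candidate $\pi_c$ sets per $\pi_s$), i.e.\ $\mathcal{O}(\alpha^k)\cdot\mathcal{O}(M^k)$, exactly as the paper does via $\sum_{i=1}^{k}\binom{p}{i}\binom{n}{i}\approx\mathcal{O}((pM)^k)$, and then multiply by the $\mathcal{O}(\alpha)$ repeat-iterations and $\mathcal{O}(M)$ candidate pairs per iteration of Algorithm~\ref{alg:kswap}. The only cosmetic difference is that you use uniform bounds ($p\le\alpha$, $M-p\le M$) and direct products where the paper carries the sums and binomial coefficients explicitly before collapsing them to the same $\mathcal{O}((\alpha M)^{k+1})$ bound.
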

\begin{proof}
	In order to search for a suited $\pi_c$ to be swapped with a certain $\pi_s$, Algorithm~\ref{alg:selection} must try to add $|\pi_s|$ couples to $\pi\setminus\pi_s$ among the couples in $S$ that are compatible with it. To simplify notation, let $i = |\pi_s|$ and $n = |\compatible_{\pi\setminus\pi_s\cup\pi_c}(S)|$. Note that at any moment $i \le k$. The attempt of Algorithm~\ref{alg:selection} to find $\pi_c$ is essentially a search of a combination of $i$ couples among $n$; that is $\binom{n}{i}$ possibilities to explore. We have $\binom{n}{i} = \frac{n!}{i!(n-i)!}$ which reduces to a polynomial of degree $n^i$:
	\begin{gather*}
	\begin{array}{lll}
	\frac{n!}{i!(n-i)!} &=& \frac{n\cdot (n-1)\dots\cdot (n-(i+1) \cdot (n-i)\cdot (n-(i-1))\cdot\dots\cdot 1}{i!\cdot(n-i)\cdot (n-(i-1))\cdot\dots\cdot 1}
	= \frac{n\cdot (n-1)\dots\cdot (n-(i+1))}{i!}
	\approx \mathcal{O}(n^i)
	\end{array}
	\end{gather*}
	
	If no suiting $\pi_c$ is found during such a search, 
	%Algorithm~\ref{alg:selection} retries for all other potential $\pi_s$ pairings. If still no $\pi_c$ is found, 	
	then $\pi_s$ gets enlarged, having its size $m$ increased by (at least) one unit.
	In the worst case, the size $i$ of $\pi_s$ is, at the start of Algorithm~\ref{alg:selection}, equal to $1$. It then gets incremented by one, until it reaches $k$ (each time more atoms from $\pi$ being considered to be part of $\pi_s$). 
	Let $p$ denote the size of the pairing $\pi$ under construction, that is $p = |\pi|$. As $k$ is constant, if  backtracking is exhaustive there are $\sum\limits_{i=1}^{k} \binom{p}{i}$
%	\begin{gather*}
%	\begin{array}{lll}
%	\sum\limits_{i=1}^{k} \binom{p}{i} &=& \sum\limits_{i=1}^{k} \mathcal{O}(p^i) \\ &=& \mathcal{O}(p) + \mathcal{O}(p^2) + ... \mathcal{O}(p^k) \\
%	&=&  \mathcal{O}(p^k) 
%	\end{array}
%	\end{gather*}
	possibilities for $\pi_s$ pairings that are explored this way. Each of these $\pi_s$ pairings leads to the search for a corresponding $\pi_c$ pairing. As such, the overall search carried out by Algorithm~\ref{alg:selection} takes a number of iterations that is in the worst case represented by
	\[
	\begin{array}{lllll}
		\sum\limits_{i=1}^{k} \binom{p}{i}\cdot\binom{n}{i} & \approx & \sum\limits_{i=1}^{k} \mathcal{O}(p^i)\cdot\mathcal{O}(n^i) & \approx & \mathcal{O}((p\cdot n)^k)
	\end{array}
	\]
Given that $n$ is bound by the number of compatible couples of atoms from $G_1\times G_2$, we will denote the worst-case time complexity of Algorithm~\ref{alg:selection} by $\mathcal{O}((p\cdot M)^k)$ with $M \le |\gen(G_1,G_2)|$ and $p$ the length of the pairing under construction $\pi$.
		
%	\begin{gather*}
%	\begin{array}{lll}
%	\sum\limits_{i=1}^{k} p^i \cdot n^i&=&\sum\limits_{i=1}^{k} (p\cdot n)^i \\
%	&=& (p\cdot n) + (p\cdot n)^2 + (p\cdot n)^3 + \dots + (p\cdot n)^k  \\
%	&=& \mathcal{O}((p\cdot n)^k) \\
%	&=&  \mathcal{O}((p\cdot M)^{k}) \mbox{ with } M \le |\gen(G_1, G_2)|
%	\end{array}
%	\end{gather*}
%	\begin{gather*}
%	\begin{array}{lll}
%	\mathcal{O}(n^m) \cdot \mathcal{O}(p^k) &=& \mathcal{O}(M^{Mk}) \mbox{ with } M \le |\gen(G_1,G_2)|
%	\end{array}
%	\end{gather*}

	%The variable $n$ is majored by the total compatible couples $|\gen(G_1, G_2)|$, so that the overall search carried out by Algorithm~\ref{alg:selection} takes a number of iterations of amplitude
	% Note that $\pi$ being the pairing under construction, its size $p$ is majored by $\min(|G_1|,|G_2|)$. 
	
Turning our attention to Algorithm~\ref{alg:kswap} it is clear that the size of pairing $\pi$ is incremented by $1$ in each iteration of the \textit{repeat}-loop, since $found$ must be true for a new iteration to occur. As such, in the worst-case scenario there can be as many iterations as there are atoms in the smallest goal amongst $G_1$ and $G_2$, seeing that a generalization size cannot exceed that of the goals it generalizes. We will denote this number by $\alpha = \min(|G_1|, |G_2|)$.
As for the inner loop of Algorithm~\ref{alg:kswap}, it can browse through up to $|\gen(G_1, G_2)| - p$ candidates for choosing the couple $(A,A')$ that will be enforced in the pairing $\pi$. This gives us at most $
	%\begin{gather*}
	\begin{array}{lll}
	\sum\limits_{p=1}^{\alpha}(|\gen(G_1, G_2)| - p) &\approx &\sum\limits_{p=1}^{\alpha}\mathcal{O}(M - p)
	%\mathcal{O}(\min(|G_1|,|G_2|) \cdot |\gen(G_1,G_2)|) &=& \mathcal{O}(M^2) 	
	\end{array}
	$ %\end{gather*} 
	iterations of Algorithm~\ref{alg:kswap}.
	Algorithm~\ref{alg:selection} being called at each inner loop iteration of Algorithm~\ref{alg:kswap}, we can represent the time complexity of the combined algorithms by
	%\begin{gather*}
	%\begin{array}{lll}
	$\sum\limits_{p=1}^{\alpha}\mathcal{O}(M-p) \cdot \mathcal{O}((p\cdot M)^k) \approx \sum\limits_{p=1}^{\alpha} \left((M-p)\cdot p^k\cdot M^k\right)$
%\end{array}
%	\end{gather*}
	which can be rewritten as $
	M^{k+1}\cdot \left(\sum\limits_{p=1}^{\alpha} p^k\right) - M^k \cdot \left(\sum\limits_{p=1}^{\alpha} p^{k+1}\right)$.
	
	Since $\sum\limits_{p=1}^{\alpha} p^k \approx \mathcal{O}(\alpha^{k+1})$ and $\sum\limits_{p=1}^{\alpha} p^{k+1} \approx \mathcal{O}(\alpha^{k+2})$, we can conclude the total complexity to be of the order $\mathcal{O}((\alpha\cdot M)^{k+1}) - \mathcal{O}(\alpha^{k+2}\cdot M^k)$ which proves the result.
\end{proof}

Whenever there is a need to compute numerous anti-unifications of unordered goals with limited time resources, the $k$-swap stability abstraction allows to keep the search space tractable while outputting goals that are, on average, close in size to that of a lcg. Such situations can e.g. arise in static analysis techniques for large Horn clause programs, such as the assessment of structural similarity between algorithms expressed in CLP~\cite{clones}. 

%We have observed, through the implementation of a Prolog prototype built for the case of (unordered) Constraint Logic Programming goals, that Algorithms~\ref{alg:kswap} and~\ref{alg:selection} find a relatively good abstraction of a lcg (the outputted generalizations have 90\% to 100\% of a looked-upon lcg size in all our test cases) while significantly diminishing the amount of time required for finding such a common generalization (when each goal harbors 10 to 15 different variables and 15 to 20 atoms of arity between 1 and 3, the average execution of our test cases takes the execution time of 26 seconds with bruteforce mechanisms down to 91 milliseconds with the k-swap stability algorithm). Obviously the algorithm seems to run in polynomial time. 
	
	%%%%%%%%%%%%%%%%%%%%%%%%%%%%%%%%%%%%%%%%%%%%%%%%%%%
	%%%%%%%%%%%%%%%%%%%%%%%%%%%%%%%%%%%%%%%% Conclusion
	%%%%%%%%%%%%%%%%%%%%%%%%%%%%%%%%%%%%%%%%%%%%%%%%%%%
	\section{Conclusions and Future Work}\label{section-conclusion}
In this work, we have systematically studied different key notions and results concerning anti-unification of unordered goals, i.e. sets of atoms. We have defined different anti-unification operators and we have studied several desirable characteristics for a common generalization, namely optimal cardinality (lcg), highest $\tau$-value (msg) and variable dataflow optimizations. For each case we have provided detailed worst-case time complexity results and proofs. An interesting case arises when one wants to minimize the number of generalization variables or constrain the generalization relations so as they are built on injective substitutions. In both cases, computing a relevant generalization becomes an NP-complete problem, results that we have formally established.
In addition, we have proven that an interesting abstraction -- namely $k$-swap stability which was introduced in earlier work -- can be computed in polynomially bounded time, a result that was only conjectured in  earlier work. 

Our discussion of dataflow optimization in Section~\ref{section-relation-2} essentially corresponds to a reframing of what authors of related work sometimes call the \textit{merging} operation in rule-based anti-unification approaches as in~\cite{Baumgartner2017}. Indeed, if the "store" manipulated by these approaches contains two anti-unification problems with variables generalizing the same terms, then one can "merge" the two variables to produce their most specific generalization. If the merging is exhaustive, this technique results in a generalization with as few different variables as possible. In this work we isolated dataflow optimization from that specific use case and discussed it as an anti-unification problem in its own right.

While anti-unification of goals in logic programming is not in itself a new subject, to the best of our knowledge our work is the first systematic treatment of the problem in the case where the goals are not sequences but unordered sets. Our work is motivated by the need for a practical (i.e. tractable) generalization algorithm in this context. The current work provides the theoretical basis behind these abstractions, and our concept of $k$-swap stability is a first attempt that is worth exploring in work on clone detection such as~\cite{clones}. 

Other topics for further work include adapting the $k$-swap stable abstraction from the $\preceq^\iota$ relation to dealing with the $\sqsubseteq^\iota$ relation. 
A different yet related topic in need of further research is the question about what anti-unification relation is best suited for what applications. For example, in our own work centered around clone detection in Constraint Logic Programming, anti-unification is seen as a way to measure the distance amongst predicates in order to guide successive syntactic transformations. Which generalization relation is best suited to be applied at a given moment and whether this depends on the underlying constraint context remain open questions that we plan to investigate in the future. 

	\bibliographystyle{plainurl}% the mandatory bibstyle
	\bibliography{main}
	
	\newpage	
	\appendix
	
	\section{Proof of Proposition~\ref{prop-quasi-order}}
\begin{proof}
	We will prove the result for relation $\sqsubseteq$, the proof for $\preceq$ being similar. We need to prove that $\sqsubseteq$ is reflexive and transitive. For reflexivity, it is obvious that since $G\subseteq G$ for any goal $G$, we have $G\sqsubseteq_\theta G$ for the empty substitution $\theta$. For transitivity, suppose that for goals $G_1$, $G_2$ and $G_3$, it holds that $G_1 \sqsubseteq_{\theta_1} G_2$ and $G_2 \sqsubseteq_{\theta_2} G_3$. Then by Definition~\ref{def-generalization}, there exist sets of atoms $\Delta_1$ and $\Delta_2$ such that $G_1\theta_1 \cup \Delta_1 = G_2$ and $G_2\theta_2\cup\Delta_2 = G_3$. In other words it holds that $(G_1\theta_1\cup\Delta_1)\theta_2\cup\Delta_2 = G_3$ or equivalently, $(G_1\theta_1)\theta_2 \cup (\Delta_1\theta_2 \cup\Delta_2) = G_3$. As the composition of two substitutions is a substitution, by defining $\theta_3 = \theta_2\circ\theta_1$ and $\Delta_3 = \Delta_1\theta_2 \cup\Delta_2$, we have $G_1\theta_3 \cup \Delta_3 = G_3$, so $G_1\sqsubseteq_{\theta_3} G_3$, which concludes the proof.
\end{proof}

	\section{Proof of Proposition~\ref{prop-msg-lcg}}
	
		First, observe the following property that holds for both relations, essentially stating that a common generalization that is not a lcg has a direct extension obtained by the addition of one atom. 
		
		\begin{proposition}\label{prop-lcg-extensible}
			Let $G_1, \dots, G_n$ and $G$ be goals such that $G$ is a $\leqslant$-common generalization, but not a $\leqslant$-lcg, of $\{G_1, \dots, G_n\}$. Then there exists an atom $A\notin G$ such that $G\cup\{A\}$ is a $\leqslant$-common generalization of $\{G_1, \dots, G_n\}$.
		\end{proposition}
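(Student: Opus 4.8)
The plan is to establish the statement by an explicit construction: assuming $G$ is a $\leqslant$-common generalization that fails to be a $\leqslant$-lcg, I will produce a single atom $A\notin G$ for which $G\cup\{A\}$ is still a $\leqslant$-common generalization. The starting observation, which I would isolate first, is that the relation ``the atoms $A$ and $A'$ admit a singleton $\leqslant$-common generalization'' is governed by an equivalence relation on $\mathcal{A}_\mathcal{C}$. For $\sqsubseteq$ two atoms are comparable in this sense exactly when they share the same predicate symbol and arity; for $\preceq$ they must in addition carry identical function symbols and constants in corresponding positions, with the variable-bearing positions coinciding. First I would fix this partition into \emph{classes} and record that within a single class any two atoms are $\leqslant$-anti-unifiable, so that the anti-unifiability structure between $G_1,\dots,G_n$ is a disjoint union of complete multipartite blocks, one per class.

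Next I would make precise the matching view of a common generalization, in line with the pairing reading of Algorithm~\ref{algo-rel-1-lcg} and Section~\ref{section-relation-3}: writing $\theta_1,\dots,\theta_n$ for the witnessing substitutions of $G\leqslant G_i$, each atom of $G$ maps to an atom of $G_i$ lying in the same class, distinct atoms of $G$ consuming distinct atoms of each $G_i$. Hence, for every class $\kappa$, the number $u^\kappa$ of atoms of $G$ assigned to $\kappa$ satisfies $u^\kappa\le r_i^\kappa$ for all $i$, where $r_i^\kappa$ counts the class-$\kappa$ atoms in $G_i$; so $u^\kappa\le\min_i r_i^\kappa$ and $|G|=\sum_\kappa u^\kappa\le\sum_\kappa\min_i r_i^\kappa$. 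The key counting step is that this right-hand sum is exactly the maximal attainable size of a common generalization (achievable class by class), so if $G$ is not a $\leqslant$-lcg then $|G|<\sum_\kappa\min_i r_i^\kappa$, forcing a deficient class $\kappa_0$ with $u^{\kappa_0}<\min_i r_i^{\kappa_0}$.

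I would then build the extension atom. In $\kappa_0$ each goal $G_i$ contains at least $r_i^{\kappa_0}-u^{\kappa_0}\ge 1$ atoms not used by $G$, so I may select one unused $A_i\in G_i$ per goal. Being in a common class, these atoms admit a $\leqslant$-common generalization $A$, which I take over fresh variables disjoint from $\vars(G)$ (for $\preceq$, $A$ is the shared skeleton with fresh variables in the variable slots). Extending each $\theta_i$ on those fresh variables so that $A\theta_i=A_i$ yields a substitution that is unchanged on $G$ and remains a renaming when $\leqslant$ is $\preceq$. Since each $A_i$ was unused, $(G\cup\{A\})\theta_i=G\theta_i\cup\{A_i\}$ is still an injective image inside $G_i$, and $A\notin G$ because it carries fresh variables; thus $G\cup\{A\}$ is a $\leqslant$-common generalization with one more atom, as required.

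The main obstacle, and the step on which I would spend the most care, is the claim that a deficient class yields an atom addable for \emph{all} $G_i$ simultaneously. A naive pigeonhole in a single coordinate only locates an atom used by a larger generalization but not by $G$ in, say, $G_1$; its images in the remaining goals may already be consumed, and for an arbitrary matching no single-atom extension need exist. What rescues the argument is precisely the complete multipartite class structure noted at the outset, under which a partial assignment inside a class is extendable as long as every goal still has a free class member. This is exactly the property that fails once injectivity of the renamings is imposed, which is why the analogous statement breaks for $\sqsubseteq^\iota$ and $\preceq^\iota$ and motivates the $k$-swap machinery of Section~\ref{section-relation-3}. I would therefore present the per-class counting identity as the heart of the proof and treat the fresh-variable bookkeeping as routine.
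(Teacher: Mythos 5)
Your proof is correct and takes a genuinely different route from the paper's. The paper argues by fixing an actual lcg $G'$ with $|G'|>|G|$, partitioning atoms into those common to $G$ and $G'$, those only in $G$, and those only in $G'$, and then ``switching'' the tuples of source atoms underlying $G'$ until it is rearranged into an equally large generalization $\hat{G}$ with $G\subset\hat{G}$; the sought atom is then any element of $\hat{G}\setminus G$. The engine of that switching step is exactly the observation you place up front: atomic $\leqslant$-anti-unifiability is an equivalence relation (same predicate and arity for $\sqsubseteq$; identical skeleton with coinciding variable positions for $\preceq$), so source tuples can be permuted freely within a class. You exploit this structure differently: you establish that the maximal attainable size of a common generalization is $\sum_\kappa\min_i r_i^\kappa$, deduce by pigeonhole that a non-lcg leaves some class deficient, and build the extension atom directly by anti-unifying one unused atom per goal over fresh variables. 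This is cleaner than the paper's rearrangement (whose termination in the desired $\hat{G}$ is asserted rather than spelled out), and it yields as a by-product an explicit formula for the size of a $\leqslant$-lcg, which the paper never states. Your closing remark that the complete multipartite class structure is precisely what fails under injectivity is also accurate and matches the paper's motivation for Section~\ref{section-relation-3}.

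Two caveats, both minor and the first shared with the paper. Your matching view --- distinct atoms of a generalization consuming distinct atoms of each $G_i$ --- is not forced by Definition~\ref{def-generalization}: a substitution may collapse two atoms of $G$ onto a single atom of $G_i$ (e.g.\ $\{p(V_1,V_2),p(V_3,V_4)\}$ generalizes $\{p(a,b)\}$), and it is exactly when bounding the \emph{larger} generalization $G'$ by $\sum_\kappa\min_i r_i^\kappa$ that your argument needs this non-collapsing reading; the deficiency and extension steps for $G$ itself do not. The paper's proof presupposes the same reading (each generalization atom is assigned a well-defined tuple of source atoms, one per goal), as do Example~\ref{ex:lcg} and Theorem~\ref{thm-k-swap-stable} when they assert that a generalization can never exceed the goals in size, so this is an ambiguity inherited from the paper rather than a flaw of your argument (under the fully literal definition the proposition still holds, by the simpler device of renaming apart the variables of a suitable atom of $G'$). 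Second, ``$A\notin G$ because it carries fresh variables'' fails verbatim for classes whose atoms contain no variables (ground skeletons under $\preceq$, $0$-ary predicates under $\sqsubseteq$); there, however, goals being sets forces $\min_i r_i^{\kappa_0}\le 1$, so deficiency gives $u^{\kappa_0}=0$, meaning $G$ contains no atom of that class and $A\notin G$ still holds.
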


		\begin{proof}
		
		Let us suppose the existence of some goal $G$, a $\leqslant$-common generalization that is not a $\leqslant$-lcg of $\{G_1, \dots, G_n\}$, and let us try and extend $G$ into a $\leqslant$-common generalization $G\cup\{A\}$ with $A\notin G$ an atom. As $G$ is not a lcg, there must exist another goal $G'$ being a $\leqslant$-lcg of $G_1$ and $G_2$ and obviously we have $|G'|>|G|$. As a consequence at this point there are three groups of atoms that can be identified: let us denote by $\hat{A}_1, \dots, \hat{A}_p$ the $p (\ge 0)$ atom(s) that are both in $G$ and in $G'$; by $A_1, \dots, A_m$ the $m (\ge 0)$ atom(s) that are part of $G$ but not of $G'$; and by $B_1, \dots, B_l$ the $l (\ge 1)$ atom(s) that are part of $G'$ but not of $G$. For an element $A$ of any of these sets, we denote by $A^1, \dots, A_n$ the atom in respectively $G_1, \dots, G_n$ whose anti-unification led to having $A$ as part of the generalizations.
		
		From the fact that $|G'|>|G|$ it follows that $l>m$. Now each $A_i (i \in 1..m)$ is such that $\exists  h\in 1..n : A_i^h\in \{B_i^h|i\in 1..l\}$: if not, it would be possible to add an atom generalizing $\{A_i^1, \dots, A_i^n\}$ (such as $A_i$) in $G'$ and get a larger generalization, which is impossible given that $G'$ is a lcg. Also note that for two atoms $B_i$ and $B_j (1\le i < j \le l)$, for $g, h \in 1..n : g\neq h$, if $B_i^g$ is anti-unifiable with an atom $B_j^h$ then $B_j^g$ is also anti-unifiable with $B_i^h$ (as it means that all four base atoms are a call to one and the same predicate (with relation $\sqsubseteq$) or have the exact same inner structure save for variables (with relation $\preceq$)), so that it is possible to switch the atoms $B_i^g$ and $B_j^h$, compute the anti-unification of $\{B_i^g, B_j^h)$ and $(B_j^g, B_i^h)$, and get an equally valid anti-unification. Thanks to this we can, where necessary, perform switches so as to rearrange the atoms $B_i (1\le i \le l)$ into $\{\hat{B}_i|i \in 1..l\}$ in such a way that $\{A_i|i\in 1..m\} \subset \{\hat{B}_i|i \in 1..l\}$ and for each atom $\hat{B}_i$, either $\hat{B}_i \in \{A_k|k\in 1..m\}$ or $\exists g, h \in 1..n : g\neq h \wedge \hat{B}_i^g\notin \{A_k^g|k\in 1..m\}\wedge \hat{B}_i^h\notin \{A_k^h|k\in 1..m\}$. We can now define a new generalization $\hat{G}$ defined as the union of these rearranged atoms and those that are common to $G$ and $G'$, i.e. $\hat{G} = \{\hat{A_i}|i\in 1..p\}\cup\{\hat{B}_i|i\in 1..l\}$. Since $|\hat{G}| = |G'|$ and $G\subset \hat{G}$, it suffices to add one of the atoms $A \in \hat{G}\setminus G$ to $G$ in order to obtain $G\cup\{A\}$, a $\leqslant$-common generalization of $\{G_1, \dots, G_n\}$ by construction.
	\end{proof}

		Next, we prove Proposition~\ref{prop-msg-lcg}.
		
	\begin{proof}
	We prove that any $\leqslant$-msg is a $\leqslant$-lcg by contradiction. Let us suppose that some goal $G$ is both a $\leqslant$-msg and not a $\leqslant$-lcg of the set of $\{G_1, \dots, G_n\}$. According to Proposition~\ref{prop-lcg-extensible} it must then be possible to select an atom $A \notin G$ such that $G\cup\{A\}$ is a $\leqslant$-common generalization of $\{G_1, \dots, G_n\}$. Since $A\notin G$ and any atom has a $\tau$-value of at least 1, it follows that $|\tau(G\cup\{A\})|>|\tau(G)|$. Consequently $G$ cannot be a $\leqslant$-most specific generalization of $G_1$ and $G_2$: a contradiction.
	
	As for the fact that any $\preceq$-lcg is a $\preceq$-msg, we prove this also by contradiction. Let $G$ represent a $\preceq$-lcg of the set of goals $\{G_1, \dots, G_n\}$ and let us suppose that $G$ is not a $\preceq$-msg. Then there must exist another goal that is a $\preceq$-msg of $\{G_1, \dots, G_n\}$, say $G'$, such that $|\tau(G')|>|\tau(G)|$ and, according to the first part of the proposition, $|G'|=|G|$. 
	Now, observe that for a set of atoms $\{A_1, \dots, A_n\}$ to be anti-unified with $\preceq$ into an atom $A$, necessarily all $A_i (1\le i\le n)$ must have the same $\terms$-value. Indeed relation $\preceq$ is defined upon renamings so that only variables (having a $\terms$-value of zero) are impacted by the generalization process. Therefore, the only possibility for the inequality $|\tau(G')|>|\tau(G)|$ to be true is that some atoms $B_1, \dots, B_n$ of respective goals $G_1,\dots,G_n$ appear in a generalized form (say $B$) in $G'$, while these atoms have not been generalized in $G$. This means that it is possible to add a (possibly renamed) version of $B$ in $G$ and obtain $G\cup\{B\}$, also a $\preceq$-common generalization and larger than $G$: a contradiction.
	\end{proof}

	\section{Detailed proof of Lemma~\ref{lemma-au-op}}
	\begin{proof}
The lemma will be shown correct by the definition of three anti-unification operators. A first anti-unification operator, based on $\sqsubseteq$ is the following. 

\begin{definition}%[Simple anti-unification operator]
	\label{def-atoms-au}
	Given a variabilization function $\Phi$, let $\au^\Phi_\sqsubseteq$ (or simply $\au_\sqsubseteq$ if $\Phi$ is clear from the context) denote the anti-unification operator such that for any two atoms $A = a(t^A_1, \dots, t^A_n)$ and $B = b(t^B_1, \dots, t^B_m)$, it holds that \[\au^{\Phi}_\sqsubseteq(A,B)=\left\{\begin{array}{l}
		a\big(\Phi(t^A_1, t^B_1), \dots, \Phi(t^A_n, t^B_n)\big) \\ \qquad  \mbox{if } a = b \mbox{ and } n = m\\
		\bot \\
		\qquad \mbox{otherwise}\\
	\end{array}\right. \]
\end{definition} 

\begin{example}\label{ex-au-sq}
	In Table~\ref{table:sqsubseteq}, we show three atomic anti-unification results obtained by the application of $\au_\sqsubseteq^\Phi$ with $\Phi$ a given variabilization function. Note how in the first example, the predicates used in $A_1$ and $A_2$ differ (resp. $p/2$ and $p/3$), leading to an impossible anti-unification.
\end{example}

\begin{table*}
	\caption{Example results for $au_\sqsubseteq^\Phi$}
	\label{table:sqsubseteq}
	\centering
	\begin{tabular}{l|l|l}
		%\hline 
		$\bm{A_1}$ & $\bm{A_2}$ & $\bm{\au_\sqsubseteq^\Phi(A_1, A_2)}$\\\hline 
		$p(X, 5, q(Y,4))$ & $p(W,t(Z))$ & $\bot$\\\hline 
		$p(r(X,3), t(5))$ & $p(W, t(Z))$ & $p(\Phi(r(X,3),W), \Phi(t(5), t(Z)))$\\\hline 
		$p(r(X,3), t(Y))$ & $p(r(W,3),t(Z))$ & $p(\Phi(r(X,3),r(W,3)), \Phi(t(Y),t(Z)))$ %\\\hline 
	\end{tabular} 
\end{table*}

Note that the anti-unification operator defined in Definition~\ref{def-atoms-au} differs from the traditional subsumption operator in the ordered case (i.e. when goals are ordered sequences of atoms). The difference comes from the fact that our goals being sets, all the possible couples of atoms have to be considered, whereas traditional subsumption must handle one atom at the time, making the anti-unification operator more straigtforward.     

Let us now introduce a second anti-unification operator that will allow to compute a $\preceq$-lcg. 
%and to prove Theorem~\ref{thm-preceq-lcg}. 
Since the result of this operator should be a $\preceq$-common generalization, the operator need only to anti-unify the \textit{variables} occurring at the corresponding positions in the atoms under investigation. 
The operator must thus go deeper into the term structure of the atoms than $\au_\sqsubseteq$ does, as it needs to only anti-unify those atoms that harbor the exact same structure at the level of their non-variable terms.
\begin{definition}%[Variable anti-unification operator]
	\label{def-term-au-through-variables}
	Given some variabilization function $\Phi$, let $\au^\Phi_\preceq$ (or simply $\au_\preceq$ if $\Phi$ is clear from the context) denote the function such that for any two terms $T = t(t_1, \dots, t_n)$ and $U = u(u_1, \dots, u_m)$ it holds that
	\[\au^\Phi_\preceq(T,U)=\left\{\begin{array}{l}
		\Phi(T,U) 
		\\ \qquad \mbox{if } T\in\mathcal{V}\mbox{ and } U\in\mathcal{V}
		\\t\big(\au^\Phi_\preceq(t_1,u_1), \dots, \au^\Phi_\preceq(t_n, u_n)\big) 
		\\ \qquad \mbox{if } t = u \mbox{ and } n = m 
		\\ \qquad \mbox{and } \forall i \in 1..n: \au^\Phi_\preceq(t_i,u_i)\neq\bot
		\\ \bot
		\\ \qquad  \mbox{otherwise}
	\end{array}\right.\]
	and for any two atoms $A = a(t^A_1, \dots, t^A_n)$ and $B = b(u^B_1, \dots, u^B_m)$, it holds that
	\[\au^\Phi_\preceq(A,B)=\left\{\begin{array}{l}
		a\big(\au^\Phi_\preceq(t^A_1, u^B_1),\dots, \au^\Phi_\preceq(t^A_n, u^B_n)\big) 
		\\ \qquad \mbox{if } a = b \mbox{ and } n = m 
		\\ \qquad \mbox{and } \forall i \in 1..n: \au^\Phi_\preceq(t^A_i, u^B_i) \neq\bot
		\\ \bot  
		\\ \qquad \mbox{otherwise}
	\end{array}\right.\]
\end{definition}

\begin{example}
	In Table~\ref{table:preceq}, we treat the anti-unification of the same atoms as above, this time with the use of $\au_\preceq^\Phi$ with $\Phi$ a given variabilization function. Note how $\au_\preceq$ behaves differently than $\au_\sqsubseteq$ on the second and third couple of atoms as it requires its arguments to exhibit a similar structure in order to be anti-unifiable.
\end{example}

\begin{table*}
	\caption{Example results for $au_\preceq^\Phi$}
	\label{table:preceq}
	\centering
	\begin{tabular}{l|l|l}
		%\hline
		$\bm{A_1}$ & $\bm{A_2}$ & $\bm{\au_\preceq^\Phi(A_1, A_2)}$\\\hline 
		$p(X, 5, q(Y,4))$ & $p(W,t(Z))$ & $\bot$\\\hline 
		$p(r(X,3), t(5))$ & $p(W, t(Z))$ & $\bot$\\\hline 
		$p(r(X,3), t(Y))$ & $p(r(W,3),t(Z))$ & $p(r(\Phi(X,W),3), t(\Phi(Y,Z)))$ %\\\hline
	\end{tabular}
\end{table*}

Now, in order to compute $\sqsubseteq$-msgs, we need a more precise anti-unification operator: one that goes deeper into detail when comparing atoms so as not to miss their maximal common structure. 
\begin{definition} %[Deep anti-unification operator]\label{def-deep-operator}
	Given some variabilization function $\Phi$, let $\dau^\Phi_\sqsubseteq$ (or simply $\dau_\sqsubseteq$ if $\Phi$ is clear from the context) denote the function such that for any two terms $T = t(t_1, \dots, t_n)$ and $U = u(u_1, \dots, u_m)$ it holds that 
	\[\dau^\Phi_\sqsubseteq(T,U)=\left\{\begin{array}{l}
		
		t\big(\dau^\Phi_\sqsubseteq(t_1,u_1), \dots, \dau^\Phi_\sqsubseteq(t_n, u_n)\big) 
		\\ \qquad \mbox{if } t = u \mbox{ and } n = m 
		\\ \qquad \mbox{and } T \notin \mathcal{V} \mbox{ and } U \notin \mathcal{V}
		\\ \Phi(T,U) 
		\\ \qquad \mbox{otherwise}
	\end{array}\right.\]
	and for any two atoms $A = a(t^A_1, \dots, t^A_n)$ and $B = b(u^B_1, \dots, u^B_m)$, it holds that
	\[\dau^\Phi_\sqsubseteq(A,B)=\left\{\begin{array}{l}
		a\big(\dau^\Phi_\sqsubseteq(t^A_1, u^B_1),\dots, \dau^\Phi_\sqsubseteq(t^A_n, u^B_n)\big) 
		\\ \qquad \mbox{if } a = b \mbox{ and } n = m 
		\\ \bot
		\\ \qquad \mbox{otherwise}
	\end{array}\right.\]
\end{definition}

When applied on atoms, it is easy to see that $\dau_\sqsubseteq$ is an anti-unification operator based on relation $\sqsubseteq$.

\begin{example}
	Let us once more consider the anti-unification of the atoms introduced in Example~\ref{ex-au-sq}. This time we make use of $\dau_\sqsubseteq^\Phi$ with $\Phi$ a given variabilization function, to anti-unify the three pairs of atoms. The result is shown in Table~\ref{table:dau}. Notice how the operator preserves as much non-variable atomic structure as possible in the process.
\end{example}
\begin{table*}
	\caption{Example results for $dau_\sqsubseteq^\Phi$}
	\label{table:dau}
	\centering
	\begin{tabular}{l|l|l}
		%\hline 
		$\bm{A_1}$ & $\bm{A_2}$ & $\bm{\dau_\sqsubseteq^\Phi(A_1, A_2)}$\\\hline 
		$p(X, 5, q(Y,4))$ & $p(W,t(Z))$ & $\bot$\\\hline 
		$p(r(X,3), t(5))$ & $p(W, t(Z))$ & $p(\Phi(r(X,3),W),t(\Phi(5,Z)))$\\\hline 
		$p(r(X,3), t(Y))$ & $p(r(W,3),t(Z))$ & $p(r(\Phi(X,W),3), t(\Phi(Y,Z)))$ %\\\hline 
	\end{tabular} 
\end{table*}
The existence of these operators proves Lemma~\ref{lemma-au-op}.
	\end{proof}
	
	\section{Proof of Theorem~\ref{thm-ausqsubseteq}}
	\begin{proof}
	Obviously the $\au_\sqsubseteq(A_1,A_2)$ operation can be achieved in a time linear with respect to the arity $n$ of $A_1$. In the worst case, the operation needs to be performed for each atom in $G_1$ with respect to each atom in $G_2$. Hence the first result.

	It is also easy to see that the $\au_\preceq(A_1,A_2)$ operation can be achieved in linear time with respect to the maximum number of function applications in the argument terms of the atom $A_1$ under scrutiny. In the worst case, the operation needs to be performed for each atom in $G_1$ with respect to each atom in $G_2$. Hence the second result.
	\end{proof}
		
%	
%	\section{Maximum Weight Matching of Example~\ref{example-mwm}}
%	See Fig.~\ref{fig:mwm}.

	\section{Proof of Theorem~\ref{thm-sqsubseteq-msg}}
\begin{proof}
	First note how the atomic anti-unifications and the weights of the associated bipartite graph's edges can be computed simultaneously, by working out $\dau_\sqsubseteq(A_1,A_2)$ for each possible couple $(A_1,A_2)$ in $G_1\times G_2$ and keeping account of the number of non-variable terms encountered during the operation (or $-1$). Given that $\dau_\sqsubseteq(A_1,A_2)$ can obviously operate linearly in the number of terms appearing in $A_1$ (denoted $N$), the computation of all weights is carried out in a time not exceeding $\mathcal{O}(|G_1|.|G_2|.N)$.
	
	Now the obtained assignment problem can be solved by existing algorithms (such as the Hungarian method~\cite{assignment}) that compute a MWM in $\mathcal{O}(n^3)$, where $n$ is the number of vertexes appearing on the side of the bipartite graph that has the most vertexes. In our case, there are $|G_1|$ left vertexes and $|G_2|$ right vertexes so that a MWM algorithm can be ran in $\mathcal{O}(max(|G_1|,|G_2|)^3)$.
\end{proof}
	
	\section{Proof of Theorem~\ref{thm-dataflow-np-complete}}
\begin{proof}
	First, let us consider MSG-MIN. It clearly belongs to NP. Indeed, given an arbitrary generalization $G$, we can verify in polynomial time whether it is a most specific generalization. The procedure is as follows. We can compute at least one $\leqslant$-msg, say $G'$, in polynomial time (see Theorem~\ref{thm-sqsubseteq-msg}). It suffices then to compare the $\tau$-value of $G'$ with that of $G$ in order to decide whether $G$ is a msg. Next, verifying whether the number of variables in $G$ is bounded by a constant is obviously achieved in polynomial time as well.
	
	In order to prove NP-hardness, we will construct a reduction from the well-known set cover problem (known to be NP-complete~\cite{karp}) to MSG-MIN. The set cover problem in its decision-problem version (denoted SCP), can be formulated as follows. Given a constant $p \in \mathbb{N}_0$, a universe $U$ of values and a collection $S$ composed of $n$ sets $\{S_1, \dots, S_n\}$ that cover $U$, i.e. $U = \underset{i=1}{\overset{n}{\cup}}S_i$, the problem is to decide whether there exists $p$ subsets from $S$ that still cover $U$.
	
	We can transform an arbitrary instance of SCP into MSG-MIN as follows. Let us consider without loss of generality a universe $U$ where the elements are lowercase strings and $p \in \mathbb{N}_0$ a constant. Given a collection of sets $S=\{S_1, \dots, S_n\}$ we construct an instance of MSG-MIN as follows. In our construction we use $n+1$ different variables, namely $V$ and $(W_i)_{i\in1..n}$. We use $x_j$ to denote some element of $U$; these elements being strings, we can easily use them as predicate names. The construction of goals $G_1$ and $G_2$ proceeds then as follows:
	
	\begin{algorithmic}
		\State $G_1 = \{\}$ 
		\State $G_2 = \{\}$ 
		\For {each ($S_i \in S$)}
		\For {each ($x_j \in S_i$)}
		\State $G_1 \gets G_1\cup \{x_j(V)\}$				
		\State $G_2 \gets G_2\cup \{x_j(W_i)\}$
		\EndFor
		\EndFor
	\end{algorithmic}
	Note that all the atoms in $G_1$ have the same argument (namely the variable $V$) and there are as many atoms in $G_1$ as there are distinct elements in $S$. In $G_2$, however, there is an atom of the form $x_j(W_i)$ for each element $x_j$ occurring in $S_i$.
	
	The construction is such that any $\leqslant$-msg of $G_1$ and $G_2$ will be a version of $G_1$ where each occurrence of a variable $V$ is replaced by $\Phi(V, W_k)$ for some $W_k\in\vars(G_2)$ (where $\Phi$ is a variabilization function). Now, introducing such a variable $\Phi(V, W_k)$ in the generalization will allow to reuse the same variable for all the atoms $x_j(V)$ in $G_1$ that have a corresponding $x_j(W_k)$ in $G_2$. In other words, choosing to have variable $\Phi(V,W_k)$ in the $\leqslant$-msg is the same as selecting the subset $S_k$ to be part of the solution of the set cover problem. Consequently, using this transformation MSG-MIN can be used to decide SCP. Since the transformation can clearly be done in polynomial time, and since SCP is known to be NP-complete, we conclude that MSG-MIN is NP-complete as well.
	
	Now let us prove the result for LCG-MIN. We know that a $\leqslant$-lcg can be computed in polynomial time, so that a positive instance of LCG-MIN can be verified just like it can be for MSG-MIN. Moreover, the absence of non-variable terms in the transformation from SCP to MSG-MIN above allows us to reuse said transformation as-is to prove that LCG-MIN is NP-hard. Indeed, since the obtained anti-unification problem doesn't harbor terms other than variables, it is both an instance of MSG-MIN and LCG-MIN. LCG-MIN is therefore also NP-complete.
\end{proof}
	
	\section{Proof of Theorem~\ref{thm-inj-np-complete}}
\begin{proof}
	INJ is in NP: given a relation $\leqslant^\iota$, goals $G_1$ and $G_2$ and a substitution (or renaming) $\theta$, it is possible to verify in polynomial time whether the application of $\theta$ on $G_1$ results on a subset of $G_2$ or not.
	As for the proof of NP-hardness, we refer to~\cite{gen} in which the problem ``is $G_1$ a $\preceq^\iota$-lcg of $G_1$ and $G_2$?'' has been proved to be NP-complete using a polynomial reduction from the Induced Subgraph Isomorphism Problem~\cite{SYSLO198291}. The same reduction can be used for the other cases, leading to the conclusion that INJ is NP-complete.
\end{proof}

	%%%%%%%%%%%%%%%%%%%%%%%%%%%%%%%%%%%%%%%%%%%%%%%%%%%
	%%%%%%%%%%%%%%%%%%%%%%%%%%%%%%%%%%%%%%%% Bibliographie
	%%%%%%%%%%%%%%%%%%%%%%%%%%%%%%%%%%%%%%%%%%%%%%%%%%%

\end{document}